\newcommand{\sign}{ \mbox{sign}}
\newcommand{\sig}{ \mbox{\boldmath{$\sigma$}}}
\newcommand{\Spec}{\mathop{\mathrm{Spec}}\nolimits}
\newtheorem*{SpectrumT}{Theorem}
\theoremstyle{plain}
\newtheorem*{Lev}{Theorem}
\theoremstyle{plain}
\newtheorem*{Poincare}{Theorem}
\theoremstyle{plain}
\newtheorem{lemma}{Lemma}
\theoremstyle{plain}
\newtheorem{remark1}{Remark}
\theoremstyle{remark}
\begin{document}

\title{Analytical study of bound states in graphene nano-ribbons and carbon nanotubes: the variable phase method and the relativistic Levinson theorem}

\date{\today}

\author{D.~S.~Miserev$^{1, 2}$} \email{d.miserev@student.unsw.edu.au}
\affiliation{$^{1}$School of Physics, University of New South Wales, Sydney, Australia}
\affiliation{$^{2}$Rzhanov Institute of Semiconductor Physics, Siberian Branch, Russian Academy of Sciences,
pr. Akademika Lavrent’eva 13, Novosibirsk, 630090 Russia}

\begin{abstract}
The problem of localized states in 1D systems with the relativistic spectrum, namely, graphene stripes and carbon nanotubes, has been analytically studied. The bound state as a superposition of two chiral states is completely described by their relative phase which is the foundation of the variable phase method (VPM) developed herein. Basing on our VPM, we formulate and prove the relativistic Levinson theorem. The problem of bound state can be reduced to the analysis of closed trajectories of some vector field. Remarkably, the Levinson theorem appears as the Poincare indices theorem for these closed trajectories. The reduction of the VPM equation to the non-relativistic and semi-classical limits has been done. The limit of the small momentum $p_y$ of the transverse quantization is applicable to arbitrary integrable potential. In this case the only confined mode is predicted.
\end{abstract}

\maketitle 
\section{Introduction}
Graphene, carbon nanotubes and topological insulators have attracted keen attention for intensive theoretical and experimental research in recent years. The uniqueness of these quantum materials with respect to fundamental physics lies in the opportunity to observe QED effects with a significantly larger coupling constant $g = e^2/s \hbar \varepsilon \sim 1$, where $s \approx c/300$ is the Fermi velocity, $\varepsilon$ is an average dielectric constant of environment (for instance, for graphene sheet on the substrate with the dielectric constant $\varepsilon_s$ one obtains $\varepsilon = (1 + \varepsilon_s)/2$). Effects such as the atomic collapse and pair production in the super-critical potentials~\cite{Popov}--\cite{Milstein}, the Adler-Bell-Jackiw anomaly (the chiral anomaly)~\cite{Nielsen}--\cite{Landsteiner} have been intensively studied. The Klein tunnelling of electrons in the gated graphene~\cite{Ando}--\cite{Reijnders} reveals the complete suppression of the backscattering. 

The present work is related to the general theoretical study of the confined electronic states in graphene nano-ribbons or single-walled carbon nanotubes affected by a longitudinal electric field. Omitting inter-valley scattering, we consider electron behavior near one of two independent Dirac points where electrons are well-described by the Dirac-Weyl hamiltonian~(\ref{hamiltonian}) in the single-particle approach. 

We propose a convenient technique to analyse bound states analytically for the 2D Dirac-Weyl equation with a 1D potential $U(x)$. It refers to the variable phase method (VPM) developed generally by P. M. Morse and W. P. Allis~\cite{Morse}, V. V. Babikov~\cite{Babikov}, F. Calogero~\cite{Calogero} and others~\cite{Sobel}--\cite{Ouerdane}. The wave function is expressed as a linear combination of two Weyl fermions and the phase between them is considered as a desired phase function for the VPM to be applied. Following this, we demonstrate the reduction to the non-relativistic and semi-classical limits. Furthermore, we consider one more limiting case of the $\delta$-potential which is applicable to any integrable potentials at sufficiently small transverse momentum $p_y$. Physically, this limit contains both the shallow quantum well limit and the opposite limit of a strongly supercritical potential.

Our VPM allows one to formulate the relativistic analogue of the Levinson theorem~\cite{Levinson}. The relativistic Levinson theorem for the Dirac equation was formulated in 3D by M. Klaus~\cite{Klaus} for central potentials, K. Hayashi~\cite{Hayashi} and R. L. Warnock~\cite{Warnock} as a relation between zeroes of the vertex function and particle poles of the total amplitude. This problem has been considered in two dimensions with the compact supported central potential~\cite{Dong}. D. P. Clemence~\cite{Clemence} thoroughly investigated the Levinson theorem for the Dirac equation with a 1D potential which satisfies the condition $\int_{-\infty}^{\infty} U(x) (1 + |x|) \, dx < \infty$ via the scattering matrix approach taking into account the half-bound states. The particular case of the relativistic Levinson theorem for symmetric 1D potentials has been studied by Q. Lin~\cite{Lin} with additional restriction for the potential to be a compact supported function, A. Calogeracos and N. Dombey~\cite{Calogeracos} for potentials of definite sign, Z. Ma {\it et al.}~\cite{Ma} with the similar condition as in~\cite{Clemence}. The developed herein method permits one to prove the Levinson theorem with the minimal restriction $\int_{-\infty}^\infty U(x) \, dx < \infty$ which significantly broadens the result obtained by D. P. Clemence. For example, our results are applicable to so-called top-gate potential~(\ref{top-gate}) for which asymptotics are expected to be realistic for the gated graphene structures~\cite{Hartmann}. Afterwards, a geometrical interpretation of the Levinson theorem together with the corresponding numerical method of integral curves analysis of some vector field are considered.

\section{Theoretical model}
Near the conic points, electrons in graphene with the gated potential $U(x)$ are approximately described by the Dirac-Weyl Hamiltonian:
\begin{equation}
\hat H = s {\sig \hat{\bf p}} + U(x)= s \sigma_x \hat p_x +s \sigma_y \hat p_y + U(x)
\label{hamiltonian}
\end{equation}
where $s$ is the Fermi velocity, $\sig = (\sigma_x, \sigma_y)$ are Pauli matrices, ${\bf p}  = - i \hbar \bm{\nabla}$. Henceforth, it is assumed that the potential decays at infinity. Further calculations are executed in the dimensionless variables: $\hbar = s = 1$. It is also assumed $p_y > 0$ where $p_y$ is the quantized transverse momentum of quasi-1D systems such as graphene nano-ribbons and single-walled carbon nanotubes where $y = r \phi$, $r$ is the radius, $\phi$ is the cyclic variable. The spectrum of the free-particle Hamiltonian is linear on the momentum: $E = \pm \sqrt{p_x^2+p_y^2}$. The negative-energy states correspond to the hole's description according to the conventional views.

The stationary wave function can be represented in a symmetric form:
\begin{equation}
\label{psi}
\Psi = \frac{ e^{i p_y y}}{\sqrt{4 W}}{g (x)+p_y^{-1} g'(x)\choose g (x)-p_y^{-1} g'(x)}e^{i \int \limits^x\left(E-U(\zeta)\right)\, d\zeta}
\end{equation}
via the axillary function $g\left( x \right)$ which is introduced in~\cite{Miserev}:
\begin{equation}
\label{eq}
g''\left(x\right)+2i\left(E-U\left(x\right)\right)g'\left(x\right)-p_y^2 g\left(x\right)=0
\end{equation}
where $E$ is the electron energy and $W$ is the normalization coefficient. Eq.~(\ref{eq}) represents an equivalent statement of the problem described by the Hamiltonian~(\ref{hamiltonian}). Further we deal with electronic states of zero current along $x$-direction. 

We now apply this condition to the analysis of confined states. Zero flow $j_x=\Psi^\dag(x) \sigma_x \Psi(x) = 0$ along $x$-direction yields the restriction on the function $g(x)$:
\begin{equation}
\label{feature1}
|g\left(x\right)|=|p_y^{-1} g'\left(x\right)|.
\end{equation}

The first consequence is that $g(x)$ and hence the electron density of confined states $\rho(x) = \Psi^\dag(x) \Psi(x) = |g(x)|^2/W$ vanishes only at infinity. Otherwise, we have from~(\ref{feature1}): $g(x_0) = g'(x_0) = 0$, $|x_0| < \infty$, which yields $g(x) \equiv 0$.

Separating modulus and phase $g(x)=R e^{i\Phi}$, we arrive at the condition: 
\begin{equation}
\label{pifagor}
\left(\Phi'\right)^2+\left(R'/R\right)^2=p_y^2,
\end{equation}
which allows for the following substitution:
\begin{equation}
\label{newfunc}
\left\{
\begin{array}{rcl}
\Phi'(x)&=&p_y \sin {\Omega(x)}\\
R'/R&=&p_y \cos {\Omega(x)}
\end{array}
\right.
\end{equation}
where the function $\Omega(x)$ is the solution of the first-order differential equation:
\begin{equation}
\label{supereq}
\Omega'(x)=2 \left(U(x)- E\right) -2 p_y \sin {\Omega(x)}.
\end{equation}
Thereby, we arrived at the desired VPM equation. We emphasize here that Eq.~(\ref{supereq}) is valid for any quantum state with zero flow, not only for bound states.

Considering bound states, we have to set the boundary conditions for the function $\Omega(x)$:
\begin{equation}
\label{asymptotes}
\left\{
\begin{array}{ll}
\Omega(x \to + \infty)=\pi +\arcsin {\frac{E}{p_y}}+2 \pi n\\
\Omega(x \to - \infty)=-\arcsin {\frac{E}{p_y}}.
\end{array}
\right.
\end{equation}
At $E \in(-p_y, p_y)$ these conditions provide the exponential decay of the density $\rho(x) \sim R^2(x)$ at infinity as it follows from~(\ref{newfunc}), $n$ being an integer.

To reveal the physical meaning of the function $\Omega(x)$, we use the following representation of the wave function:
\begin{equation}
\label{psidiscrete}
\Psi(x, y)= \frac{ e^{i p_y y}}{\sqrt{4 W}}\left({1 \choose 1} + e^{i \Omega} {1 \choose -1}\right) R(x) e^{-i \Omega/2}.
\end{equation}
Hence, confined state appears as a linear combination of two chiral (Weyl) states and is completely described by the phase between them. 
Another form of Eq.~(\ref{psidiscrete}) refers to the spin with the polar angle  $\Omega$  and the azimuthal angle $-\pi/2$:
\begin{equation}
\label {psidiscrete2}
\Psi (x, y)= \frac{R(x) e^{i p_y y}}{\sqrt{W}}{\phantom{-i}\cos \frac{\Omega}{2}\choose -i \sin \frac{\Omega}{2}}.
\end{equation}

\section{Non-relativistic limit}
Let us show that Eq.~(\ref{supereq}) can be reduced to a non-relativistic equation. 
To be more specific, consider the non-relativistic limit for electrons: 
\begin{eqnarray*}
E = p_y + \varepsilon, \\
\varepsilon = - k^2 /2 p_y,
\end{eqnarray*}
where we imply that all energy scales are small as compared with $p_y$: $k, U(x), 1/d \ll p_y$, $d$ is the characteristic width of the confinement. Boundary conditions~(\ref{asymptotes}) for $\Omega(x)$ take the form: $\Omega(-\infty) = -\pi/2 + k/p_y$, $\Omega(+\infty) = -\pi/2 - k/p_y +2 \pi n$, $n$ being an integer. 

Suppose $\Omega(x) = -\pi/2 + \delta \Omega$, where $\delta \Omega \ll 1$ almost everywhere. This assumption is violated only when $\Omega' \sim p_y$ which corresponds to $\delta \Omega \sim 1$. The behaviour of the phase function $\Omega(x)$ in this region does not depend on the potential because $U(x) \ll p_y$. Notice that the width of this region $\delta x \sim 1/p_y \ll d$ is small in the non-relativistic limit. Hence, the expansion of the initial equation~(\ref{supereq}) results in the Riccati equation:
\begin{equation}
\label{NonRel}
\delta \Omega' = 2(U(x) - \varepsilon) - p_y \delta \Omega^2,
\end{equation}
where $\psi(x) = \exp \left( p_y \int \delta \Omega(x) \, dx \right)$ satisfies the 1D Schrodinger equation for a non-relativistic particle with mass $p_y$. The function $\delta \Omega(x)$ tends to the infinity in zeroes of the wave function $\psi(x)$.

\section{Semi-classical limit}
Let us rewrite Eq.~(\ref{supereq}) in the dimensional quantities:
\begin{equation}
\hbar \Omega' = \frac{2}{s} \left( U(x) - E\right) - 2 p_y \sin \Omega,
\end{equation}
where $s$ is the Fermi velocity.
In the semi-classical limit $\hbar \to 0$ the elimination of the left-hand part of this equation yields:
\begin{equation}
\label{quasiw}
\sin \Omega = \frac{U(x) - E}{s p_y}.
\end{equation}
Let us show that Eq.~(\ref{quasiw}) represents the usual quasi-classical approach.

This approximation is solvable in the real-valued functions when $|U(x) - E| < s p_y$, which conforms to the case of non-classical motion where the wave function decays. At breakpoints $x_i$, when $U(x_i) - E = -\mu \cdot s p_y$ we define $\Omega(x_i) = - \mu \pi/2$, $\mu = \pm 1$ is definite for each region of motion. 

In the regions of classical motion where the wave function is oscillatory, $\Omega(x)$ is a complex function, namely, $\Omega(x) = -\mu \pi/2 + i \delta \Omega$:
\begin{equation}
\label{domega}
\cosh \delta \Omega(x) = -\mu \frac{U(x) - E}{s p_y} = \left| \frac{U(x) - E}{s p_y} \right|.
\end{equation} 
Eq.~(\ref{domega}) has two solutions $\pm \delta \Omega$ (for definiteness, we set the first solution $\delta\Omega \ge 0$).
The corresponding amplitude of the wave function $R_\pm(x)$ is determined from Eq.~(\ref{newfunc}):
$$
R_\pm (x) \sim \exp\left(\pm i \frac{p_y}{\hbar} \int \sinh \delta \Omega(x) \,dx \right).
$$
According to the definition, it is required that the function $R(x)$ is real-valued. It means that we have to consider a linear combination of corresponding functions $g_\pm (x) = R_\pm (x) e^{i \Phi_\pm(x)}$ where
$$
\Phi_\pm (x) = -\mu \displaystyle \int \left| \frac{U(x) - E}{s}\right| \, \frac{dx}{\hbar} = \displaystyle \int \frac{U(x) - E}{s} \, \frac{dx}{\hbar},
$$
which follows from Eq.~(\ref{newfunc}) and $\Phi$ is the same for the two different solutions of Eq.~(\ref{domega}).
Finally, the semi-classical amplitude reads:
\begin{equation}
R(x) \sim \cos\left(\int p_x \, \frac{dx}{\hbar} + \phi_0 \right)
\end{equation}
where the semi-classical momentum $p_x= p_y  \sinh \delta \Omega(x)= \sqrt{\left( E - U(x) \right)^2/s^2 - p_y^2}$ is introduced. The phase $\phi_0$ is defined by the matching conditions. 

Hence, Bohr-Sommerfeld quantization takes the usual form:
\begin{equation}
\oint p_x dx = 2 \pi \hbar (n + \gamma)
\end{equation}
where $n \gg 1$ is an integer, $\gamma \sim 1$ is defined from the matching conditions in the turning points; for example, $\gamma = 1/2$ for smooth potentials. The semi-classical approximation is valid when $\hbar p_y U'(x) \ll s p_x^3$.

\section{Delta-potential limit}
\label{Deltap}
Before we start, we emphasize that we do not require from the confinement $U(x)$ to be $\delta$-like. The reason why we name this limit as the delta-potential limit is that at some conditions the discrete spectrum and corresponding wave functions of any integrable potential are of the same analytical form as for the actual $\delta$-potential which is considered in Appendix A.

In this section we are interested in all possible cases when we are entitled to neglect the non-linear term in Eq.~(\ref{supereq}). It allows to find the spectrum and corresponding wave functions exactly. Let us formulate the following 

\begin{SpectrumT}
Let the potential $U(x)$ be an integrable function, $d$ is the characteristic width of $U(x)$, $p_y > 0$ is transverse momentum. Introduce the integral
	\begin{equation}
		G = \int_{-\infty}^\infty U(x) \, dx = \pi (n_G + \delta n_G),
		\label{gexp}
	\end{equation}
where $n_G$ is integer and $\delta n_G \in [0, 1)$
Assume $\delta n_G \ne 0$. 

Let the condition be met:
\begin{equation}
p_y d \ll \min \{\delta n_G, 1 - \delta n_G\}.
\label{pycond}
\end{equation}
Then:
\begin{itemize}
	\item[{\bf a}]
	The discrete spectrum contains the only one level with energy $E \in (-p_y, p_y)$:
	\begin{equation}
	\label{Deltalimit}
	E = (-1)^{n_G+1} p_y \cos G,
	\end{equation}
	
	\item[{\bf b}]
	If additionally $\int\limits_{x_0}^{x} U(x') x' \, dx'$ converges at $x \to \pm \infty$ at some $|x_0| < \infty$, the corresponding wave function takes the form~(\ref{psidiscrete2}) with the phase function:
	\begin{equation}
	\Omega(x) = -\arcsin \frac{E}{p_y} + 2 \int \limits_{-\infty}^x U(x') \, dx'.
	\label{wavef}
	\end{equation}
	
	\end{itemize}
\end{SpectrumT}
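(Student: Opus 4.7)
The plan is to leverage the fact that condition~(\ref{pycond}) makes the nonlinear $2p_y\sin\Omega$ term in the VPM equation~(\ref{supereq}) a small perturbation that can be dropped at leading order. I would split the real line into a ``core'' region of width $\sim d$ around the support of $U$ and two ``tails'' where $U$ is negligible. Inside the core, the integrated contribution of $-2E-2p_y\sin\Omega$ to $\Omega'$ is bounded by $\sim p_y d$, which by~(\ref{pycond}) is much smaller than $\pi\min\{\delta n_G,1-\delta n_G\}$. In each tail the equation reduces to $\Omega'=-2E-2p_y\sin\Omega$, which has stable fixed points at exactly the constants prescribed by~(\ref{asymptotes}), so $\Omega$ relaxes to them exponentially on the $1/p_y$ scale without affecting the cumulative change of $\Omega$ across the core.

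Integrating the reduced equation $\Omega'\approx 2U$ from $-\infty$ with the lower boundary condition in~(\ref{asymptotes}) gives, to leading order, precisely the candidate formula~(\ref{wavef}). Evaluating it at $x\to+\infty$ and equating with the upper asymptote in~(\ref{asymptotes}) yields, modulo $2\pi$,
\[
2\arcsin\frac{E}{p_y}=2G-\pi\pmod{2\pi}.
\]
Writing $G=\pi(n_G+\delta n_G)$ and imposing $\arcsin(E/p_y)\in(-\pi/2,\pi/2)$, the only admissible branch is $\arcsin(E/p_y)=\pi(\delta n_G-1/2)$, which rearranges to $E=-p_y\cos(\pi\delta n_G)=(-1)^{n_G+1}p_y\cos G$, i.e.\ the formula~(\ref{Deltalimit}).

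For the existence and uniqueness claim of part~{\bf a} I would formalise this with a shooting argument in $E$: the map $E\mapsto\Omega(+\infty;E)$ computed from~(\ref{supereq}) with $\Omega(-\infty)=-\arcsin(E/p_y)$ is continuous and monotonic on $(-p_y,p_y)$ by standard comparison for first-order ODEs, and the zeroth-order analysis above locates its unique crossing of the upper asymptote at~(\ref{Deltalimit}); condition~(\ref{pycond}) guarantees that the $O(p_y d)$ correction can neither push the resulting $\arcsin(E/p_y)$ out of $(-\pi/2,\pi/2)$ nor allow a neighbouring branch (with $n\ne n_G$) to enter the admissible window. Part~{\bf b} then follows by substituting the zeroth-order $\Omega(x)$ back into~(\ref{psidiscrete2}); the extra hypothesis that $\int_{x_0}^{x}U(x')x'\,dx'$ converges is what permits, via integration by parts, a pointwise (rather than merely asymptotic) bound on the correction, so that~(\ref{wavef}) is valid uniformly in $x$ and not only as $x\to\pm\infty$.

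The main obstacle will be the uniform control of the neglected $\sin\Omega$ term: one must show that its accumulated contribution to $\Omega(+\infty)-\Omega(-\infty)$ stays well below $\pi\min\{\delta n_G,1-\delta n_G\}$, which is exactly what~(\ref{pycond}) is engineered to ensure, but making this rigorous requires a Gronwall-type estimate coupling the core and the tail regions through the monotonicity of the right-hand side of~(\ref{supereq}) in $\Omega$. For part~{\bf b} the analogous pointwise bound is the delicate step, for which $xU(x)$-integrability is the natural sufficient assumption.
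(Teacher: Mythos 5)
Your derivation is essentially the paper's own: both treat $\mathfrak{K}=\int_{-\infty}^{\infty}2\bigl(E+p_y\sin\Omega(x)\bigr)\,dx$ as an $O(p_y d)$ correction to the exact balance $2\arcsin(E/p_y)+2\pi(n+\tfrac12)=2G+\mathfrak{K}$, select the unique branch $n=n_G$ from the range of $\arcsin$ under condition~(\ref{pycond}), and identify integrability of $xU(x)$ as what controls the first correction to $\Omega_0(x)=\Omega_-+2\int_{-\infty}^{x}U$, so parts {\bf a} and {\bf b} come out exactly as in the paper. One phrase in your sketch is, however, wrong as stated and would undermine the existence argument if taken literally: the upper asymptote $\Omega_+$ prescribed by~(\ref{asymptotes}) is a \emph{repellor} of the tail equation $\Omega'=-2E-2p_y\sin\Omega$ at $x\to+\infty$ (the paper's Lemma~2c), so the solution does not ``relax'' to it --- the eigenvalue condition is precisely that the left separatrix lands on this unstable point, i.e.\ a degeneracy of the two separatrix families. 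Consequently $E\mapsto\Omega_l(+\infty;E)$ is not continuous but piecewise constant with $-2\pi$ jumps at the eigenvalues (this is the content of the paper's Levinson theorem), so your shooting argument should locate the jump of a step function rather than the zero of a continuous monotone map; the leading-order spectrum formula is unaffected, but this is the step that needs restating to make the existence/uniqueness claim of part {\bf a} rigorous.
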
 

\begin{proof}[{\bf Proof.}] 
	
	We mean here that $U(x)$ is an integrable function in a sense that the primitive integral 
	$$
	f_{x_0}(x) = \int_{x_0}^x U(x') \, dx'
	$$ 
	for some $|x_0| < \infty$ is defined for any $x \in (-\infty, +\infty)$ except maybe some finite set of points, and $f_{x_0}(x)$ is bounded function. We set parameter $E \in(-p_y, p_y)$.

\begin{itemize}
	\item
	Let $\Omega(x)$ is a physical solution with boundary conditions~(\ref{asymptotes}).
	Then the total variance of the phase function $\Delta\Omega = \Omega(+\infty) - \Omega(-\infty)$ is straightforward from~(\ref{asymptotes}):
	\begin{equation}
		\Delta \Omega = 2 \arcsin \frac{E}{p_y} + 2 \pi \left(n+\frac{1}{2}\right).
		\label{omvar}
	\end{equation}
	
	On the other hand, the integration of Eq.~(\ref{supereq}) yields:
	\begin{equation}
	\label{DeltaOmega}
	\Delta \Omega = 2 G + \mathfrak{K},
	\end{equation}
	where $n$ is the integer. We introduced the integral:
	\begin{equation}
		\mathfrak{K} = \int \limits_{-\infty}^\infty 2 (E + p_y \sin \Omega(x))\, dx.
	\end{equation}
	
	{\it Convergence of $\mathfrak{K}$.}
	
	Let us use {\bf Lemma 2} about the properties of solutions of Eq.~(\ref{supereq}) and rewrite $\mathfrak{K}$:
	$$
	\mathfrak{K} = 2 p_y \int\limits_{-\infty}^\infty(\sin \Omega(x) - \sin \Omega_\pm)\, dx.
	$$
	From {\bf Lemma 2} we know that the physical solution corresponds to the degeneration of two separatrix families of Eq.~(\ref{supereq}). Let us consider the behavior of this physical solution at $x \to -\infty$ where we can represent it in the form:
	$$
	\Omega(x) = \Omega_- + \delta\Omega(x).
	$$
	At $x \to -\infty$, $\delta\Omega(x)$ satisfies the approximate equation which follows directly from Eq.~(\ref{supereq}):
	$$
	\delta\Omega'(x) \approx 2 U(x) - 2 k \cdot \delta\Omega(x),
	$$
	where we accounted for that $p_y \cos\Omega_- = k > 0$, $k = \sqrt{p_y^2 - E^2}$. The solution which meets the initial condition $\delta\Omega(-\infty) = 0$ reads:
	\begin{equation}
	\delta\Omega(x) = 2 \int\limits_{-\infty}^{x} U(x') e^{-2k (x-x')} \, dx'.
	\label{domex}
	\end{equation}
	
	Apply it to analyze the convergence of $\mathfrak{K}$ at $-\infty$. If $x \to -\infty$ we can use the expansion $p_y (\sin\Omega(x) - \sin\Omega_-) \approx k \cdot \delta\Omega(x)$. Then we get:
	\begin{eqnarray*}
	&& 2 p_y \int\limits_{-\infty}^x (\sin\Omega(x') -\sin\Omega_-) \, dx' \approx  \\ 
	&& \approx 2 k \int\limits_{-\infty}^x \delta\Omega(x') \, dx' = 2 \int\limits_{-\infty}^x U(x') \, dx' - \delta \Omega(x).
	\end{eqnarray*}
	It proves the convergence of $\mathfrak{K}$ at $-\infty$ once $U(x)$ is an integrable function. One can prove by analogy the convergence at $+\infty$. Hence, $\mathfrak{K}$ converges. 
	
	{\it Estimation of $\mathfrak{K}$.}
	
	The convergence allows us to introduce some characteristic scale $D(\varepsilon)$ which is a diameter of the convergence domain of $\mathfrak{K}$. Mathematically, for any $\varepsilon > 0$ the number $0 < D(\varepsilon) < \infty$ exists that 
	$$
		\left|\mathfrak{K} - 2 p_y \int\limits_{-D(\varepsilon)/2}^{D(\varepsilon)/2} (\sin \Omega(x) - \sin \Omega_\pm)\, dx \right| < \varepsilon.
	$$

	We will consider only those cases when we can omit $\mathfrak{K}$ in Eq.~(\ref{DeltaOmega}). Then, let us estimate the order of magnitude. As we can see from the convergence proof, integrals $\mathfrak{K}$ and $G$ converge simultaneously. Then:
	\begin{equation}
		\mathfrak{K} \sim O(p_y \cdot d),
		\label{Kestim}
	\end{equation}
	where $d$ is the characteristic convergence length of the integral $G$ or, alternatively, the characteristic length of the confinement.
	
	We are ready now to prove the theorem. 
	
	\item[{\bf a}]
	Combining Eq.~(\ref{omvar}) and Eq.~(\ref{DeltaOmega}) we get:
		\begin{equation}
			\label{spec}
			\arcsin \frac{E}{p_y} =  \pi\left( \delta n_G + \frac{\mathfrak{K}}{2\pi} - \frac{1}{2} + n_G-n\right).
		\end{equation}
	If the condition~(\ref{pycond}) is met, we can omit $\mathfrak{K}$ in Eq.~(\ref{spec}). After that we can set $n = n_G$ because $\arcsin x \in [-\pi/2, \pi/2]$ which finally gives:
	$$
	 \arcsin \frac{E}{p_y} = \pi \left(\delta n_G - \frac{1}{2}\right)
	$$
	that is equivalent to Eq.~(\ref{Deltalimit}).
	
	\item[{\bf b}] 
	In order to obtain the wave function, we can naively neglect the influence of the non-linear term of Eq.~(\ref{supereq}) and, hence, the approximate solution reads:
	$$
	\Omega_0(x) = \Omega_- + 2\int\limits_{-\infty}^x U(x') \, dx'
	$$
	which coincides with~(\ref{wavef}).
	However, this approximation is valid when there is no divergence in the following correction of order of $p_y d$. This correction can be estimated as follows:
	\begin{eqnarray*}
	&&\Omega_1(x) = \\
	&&= -2 p_y \int\limits_{-\infty}^x (\sin\Omega_0(x) - \sin\Omega_-) \, dx' + \Omega_1(-\infty),
	\end{eqnarray*}
	where we imply that the integral converges. Checking the convergence at $x \to -\infty$:
	\begin{eqnarray*}
		&&\Omega_1(x) \approx -2 k \int\limits_{-\infty}^x \int\limits_{-\infty}^{x'} U(x'') \, dx'\, dx'' + \Omega_1(-\infty),
	\end{eqnarray*}
	where this double integral reduces to $\int\limits_{-\infty}^x U(x') x' \, dx'$, which means that we can use the approximate wave function~(\ref{wavef}) only when $x U(x)$ is integrable.
	
	This is unsurprising because for the convergence of $\mathfrak{K}$ at the condition of integrability of $U(x)$ we required the exponential decay of $\Omega(x)$ to $\Omega_-$ at $x \to -\infty$ as it is shown by Eq.~(\ref{domex}). It means that we cannot neglect the dependence of wave function on $k$ and thus, we are not allowed to use the approximate wave function~(\ref{wavef}) if $U(x)$ is integrable but not $x U(x)$. However, the spectrum~(\ref{Deltalimit}) is valid even if $x U(x)$ is non-integrable once $U(x)$ is integrable and the condition~(\ref{pycond}) is met.
		
	Physically, this limit can be understood as a supercritical regime for the confinement $U(x)$. If we consider the case where $U(x)$ is a quantum well with the characteristic depth $U_0$ and width $d$, then, $\pi \delta n_G \lesssim G \sim U_0 \cdot d$ and the condition~(\ref{pycond}) gives $U_0 \gg p_y$ which corresponds to the strong supercritical regime. 

	Hence, once the condition~(\ref{pycond}) is valid, we get for any integrable potential:
	\begin{equation}
	\label{DeltaLimit}
	\arcsin \frac{E}{p_y} \approx  G - \pi \left(n+\frac{1}{2}\right).
	\end{equation}
\end{itemize}
\end{proof}

We did not consider the cases $G = \pi n_G$, $n_G$ is an integer because it requires more fine analysis than represented above.

{\it Zero-energy states}

We are going to compare our results with some recent analytical works on graphene states. As an example, let us consider the condition for the existence of confined modes with zero energy (exactly in Dirac point). Zero-energy confined states and their importance in possible construction of 1D gated structures (waveguides) were discussed thoroughly in~\cite{Hartmann}. 

According to Eq.~(\ref{DeltaLimit}), we arrive at the desirable restriction, if Eq.~(\ref{pycond}) is valid:
\begin{equation}
\label{Zeromodes}
G =\pi \left( n + \frac{1}{2} \right),
\end{equation}
where $n$ is an integer. This constriction means that we cannot have zero-energy confined states at arbitrarily small potential strength $G$. However, at any $G \ne \pi n$ we have at least one bound state.

In~\cite{Hartmann} the analytical solution for zero-energy modes in the gate potential $V(x) = - U_0 /\cosh(x/d)$, $U_0 > 0$, is provided. 
Taking into account that for this case $G = - \pi U_0 d$ we arrive at the condition for zero-energy mode existence in the limit of small $p_y$:
$$
U_0 d = n + \frac{1}{2}
$$
where $n$ is a non-negative integer. Hence, we cannot have a confined zero-energy modes once $|U_0 d| < 1/2$ which coincides exactly with the condition obtained analytically in~\cite{Hartmann}.

Thorough analytical study of bound states in the potential 
\begin{equation}
	V(x) = - U_0 /\cosh(x/d)
	\label{secant}
\end{equation}
for non-zero energies has been done in the recent paper~\cite{Quasi}. The authors claim that there is a threshold value of the potential strength $G = \pi U_0 d > \pi/2$ for the first confined state to appear. We suppose that something essential is missing in the work~\cite{Quasi} since this strong statement immediately contradicts the non-relativistic limit and the limit of $\delta$-potential that are developed herein. 

Let us now compare our VPM method with one developed by D. A. Stone {\it et al.}~\cite{Portnoi}. They considered another phase function which satisfies a more complex equation. One of the substantial points of their paper is that zero-energy mode exists for arbitrarily small power-law decaying (faster than $1/x$) potentials. And again this statement strongly contradicts with Eq.~(\ref{Zeromodes}). Moreover, their asymptotic analysis resulted in no bound states for the potential~(\ref{secant}) if  $p_y < 1/d$. It apparently contradicts with our $\delta$-limit. 

Finally, consider the potential $V(x) = U_0 \exp(- |x|/d)$. Zero-energy mode condition was found analytically in~\cite{Portnoi} where the minimal potential strength is stated as $(U_0 d)_{min} = \pi/4$. Our model predicts zero-energy modes when $2 U_0 d = \pi (n+1/2)$ in excellent agreement with analytical solution.

Due to the simplicity of our method, let us calculate the condition of zero-energy mode existence for so-called top-gate potential $V_{t}(x)$ (see reference~\cite{Hartmann}):
\begin{equation}
\label{top-gate}
V_t(x) = \displaystyle \frac{U_0}{2} \ln \left(\frac{x^2 + (h_2 - h_1)^2}{x^2 + (h_2 + h_1)^2}\right)
\end{equation}
where parameters $h_1 < h_2$ depend on geometry of the gate electrodes. Namely, $h_1$ is a width of the insulator between the graphene plane and so-called back-gate electrode, $h_2$ is a distance between top and back electrodes. Applying Eq.~(\ref{Zeromodes}) one receives the condition of zero mode existence:
$$
U_0 h_1 = \frac{1}{2} \left( n + \frac{1}{2}\right) \ge \frac{1}{4}.
$$
Notice that this condition does not depend on the bigger parameter $h_2$ which in our case determines the distance between electrodes. 

Hence, the $\delta$-potential limit is a simple and powerful tool to study one-particle confined states in arbitrary integrable 1D gate potentials in graphene stripes and it should be included in the analysis of bound states for concrete configuration of the gate potential to avoid possible misconceptions.

\section{Relativistic Levinson theorem}

In this section, we formulate the oscillation theorem in terms of the phase function $\Omega(x)$ as it has been done for the case of massive non-relativistic particles through the analysis of the scattering phase function~\cite{Morse}.

Before we set out the main theorem, we give some properties of the solutions to Eq.~(\ref{supereq}).

\begin{lemma}[of continuity]
	Define the following function: $f_{x_0}(x) =\int_{x_0}^{x} U(x') \, dx'$, $|x_0| < \infty$ is some constant. Let $f_{x_0}(x) \in C^k$, where $k$ is a non-negative integer, $C^k$ is the $k$-th class of differentiability. Then every solution of Eq.~(\ref{supereq}) belongs to $C^k$.
\end{lemma}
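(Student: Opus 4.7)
The plan is to absorb the possibly non-smooth term $2U(x)$ into a new unknown, after which the resulting equation has a continuous right-hand side and standard ODE bootstrapping applies. Concretely, I would set $\phi(x) = \Omega(x) - 2 f_{x_0}(x)$. Substituting $\Omega = \phi + 2 f_{x_0}$ into Eq.~(\ref{supereq}) yields
\begin{equation*}
\phi'(x) = -2E - 2 p_y \sin\!\bigl( \phi(x) + 2 f_{x_0}(x) \bigr),
\end{equation*}
whose right-hand side is continuous in $x$ as soon as $\phi$ and $f_{x_0}$ are continuous. In effect, the potentially singular part of $U$ has been entirely absorbed into the primitive $f_{x_0}$, and the equation for $\phi$ is a classical ODE whose RHS is a smooth function of $\phi$ and a continuous function of $x$.

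I would then run an induction on $k$. For the base case $k=0$, the integral form of Eq.~(\ref{supereq}),
\begin{equation*}
\Omega(x) = \Omega(x_0) + 2 f_{x_0}(x) - 2E(x-x_0) - 2 p_y \int_{x_0}^{x} \sin \Omega(x')\,dx',
\end{equation*}
shows $\Omega$ is continuous whenever $f_{x_0}$ is, so $\phi$ is continuous; the displayed equation for $\phi$ then gives $\phi' \in C^0$, i.e.\ $\phi \in C^1$, and thus $\Omega = \phi + 2 f_{x_0}\in C^0$, the regularity of $f_{x_0}$ being the bottleneck. For the inductive step, assume the claim for $k-1$ and suppose $f_{x_0}\in C^k \subset C^{k-1}$; by the inductive hypothesis $\Omega$ and $\phi$ lie in $C^{k-1}$, so the argument of $\sin$ is in $C^{k-1}$, and since $\sin$ is smooth the RHS of the $\phi$-equation is in $C^{k-1}$, yielding $\phi\in C^k$ and hence $\Omega = \phi + 2 f_{x_0}\in C^k$.

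The main subtlety lies in the base case, because the hypothesis $f_{x_0}\in C^0$ only says $U$ is locally integrable, not continuous, so Eq.~(\ref{supereq}) must be read in the Carath\'eodory sense with $\Omega$ absolutely continuous and $\Omega'$ defined almost everywhere. The substitution $\phi = \Omega - 2 f_{x_0}$ precisely sidesteps this difficulty, because the equation for $\phi$ has a continuous right-hand side and therefore admits classical $C^1$ solutions; from there every step of the induction is immediate from the chain rule applied to composition with the smooth function $\sin$.
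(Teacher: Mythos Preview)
Your proof is correct, and the substitution $\phi = \Omega - 2f_{x_0}$ is a nice device that the paper does not use. The paper argues more directly: for the base case it integrates Eq.~(\ref{supereq}) over $[x,x+\epsilon]$ and bounds
\[
|\Omega(x+\epsilon)-\Omega(x)| \le 2\left|\int_x^{x+\epsilon} U\right| + 2\epsilon(|E|+p_y),
\]
which tends to zero by continuity of $f_{x_0}$; for the inductive step it differentiates Eq.~(\ref{supereq}) $n-1$ times and observes that $\Omega^{(n)} = 2f_{x_0}^{(n)} - 2p_y(\sin\Omega)^{(n-1)}$, where the second term involves only derivatives of $\Omega$ up to order $n-1$, continuous by the inductive hypothesis. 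Your route buys a cleaner separation of the rough part ($2f_{x_0}$) from the smooth part ($\phi\in C^{k+1}$ whenever $f_{x_0}\in C^k$, so the bottleneck is always $f_{x_0}$), and your remark about Carath\'eodory solutions in the base case is more careful than the paper, which tacitly assumes the integral of $\sin\Omega$ makes sense. The paper's route is slightly more elementary in that it needs no auxiliary unknown. Both are standard bootstrapping; the underlying idea is the same, only the bookkeeping differs.
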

\begin{proof}[{\bf Proof}]
	We prove this by induction.
	\begin{itemize}
		\item[{\bf a}] If $k=0$ then $f_{x_0}(x)$ is a continuous function. It is equivalent to the condition:
		$ \int_{x}^{x+\epsilon} U(x') \, dx' \to 0$ if $\epsilon \to 0$ at arbitrary $x \in (-\infty, \infty)$.
		Then, integrate Eq.~(\ref{supereq}) from $x$ to $x+\epsilon$:
		\begin{eqnarray*}
			|\Omega(x+\epsilon) - \Omega(x)| = \left| 2\int_{x}^{x+\epsilon} U(x') \, dx' - \right.\\
			\left.-2 \int_{x}^{x+\epsilon} (E + p_y \sin \Omega(x')) \, dx'\right| \le \\
			2 \left|\int_{x}^{x+\epsilon} U(x') \, dx' \right| + 2 \epsilon (p_y +|E|) \to 0,
		\end{eqnarray*}
		which confirms the continuity of any solution of Eq.~(\ref{supereq}).
		
		\item[{\bf b}]  Assume that the statement of the lemma is true at all $k<n$, where $n$ is positive integer. Let $f_{x_0}(x) \in C^n$. Then prove the Lemma at $k=n$. Differentiate Eq.~(\ref{supereq}) $n-1$ times:
		$$
		\Omega^{(n)}(x) = f_{x_0}^{(n)}(x) - 2(E+ p_y \sin\Omega(x))^{(n-1)},
		$$
		where $f_{x_0}^{(n)}(x)$ is continuous by the condition of the lemma. $2(E+ p_y \sin\Omega(x))^{(n-1)}$ is continuous by inductive assumption because it contains derivatives of $\Omega(x)$ not higher than $n-1$. Then $\Omega^{(n)}(x)$ is continuous function, or $\Omega(x) \in C^{(n)}$.
	\end{itemize}
\end{proof}

We need to make one additional comment. If $f_{x_0}(x)$ is a piecewise-continuous function (this means that $U(x)$ has $\delta$-like singularities at discontinuity points), all solutions of Eq.~(\ref{supereq}) are piecewise-continuous with the same discontinuity points as $f_{x_0}(x)$. In other words, the statement of the {\bf Lemma 1} is valid even if $f_{x_0}(x)$ is a piecewise-continuous function.

\begin{lemma}[of attractors and repellors]
	Let $U(x) \to 0$ at $x \to \infty$, $E\in (-p_y, p_y)$. Then:
	\begin{itemize}
		\item[{\bf a}] All solutions of Eq.~(\ref{supereq}) at infinity come to stationary points of the free motion equation (i.e. with zero potential).
		\item[{\bf b}] There are two families of stationary points:
		\begin{equation}
		\label{attract}
		\left\{
		\begin{array}{ll}
		\Omega_- = -\arcsin (E/p_y) + 2 \pi n\\
		\Omega_+ = \arcsin (E/p_y) + 2 \pi \left(n + 1/2\right).
		\end{array}
		\right.
		\end{equation}
		\item[{\bf c}] $\Omega_+$ ($\Omega_-$) is an attractor (repellor) at $x\to -\infty$; \\
		$\Omega_+$ ($\Omega_-$) is a repellor (attractor) at $x\to +\infty$.
		\item[{\bf d}] There are two types of separatrix solutions which are defined by following Cauchy problems:
		\begin{equation}
		\label{separatrix}
		\left\{
		\begin{array}{ll}
		\Omega_l(x \to -\infty)=\Omega_-\\
		\Omega_r(x \to +\infty)=\Omega_+.
		\end{array}
		\right.
		\end{equation}
		We call $\Omega_l(x)$ ($\Omega_r(x)$) the left (right) separatrix.
		\item[{\bf e}] The bound state problem is equivalent to the degeneracy of two separatrix families $\Omega_l$ and $\Omega_r$.
	\end{itemize}
\end{lemma}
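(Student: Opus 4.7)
My plan is to handle parts (b) and (c) first by direct calculation on the limiting autonomous equation $\Omega' = -2(E + p_y\sin\Omega)$, then use them to obtain (a), (d), and (e) as dynamical consequences. For (b), the stationary points satisfy $\sin\Omega = -E/p_y$; since $|E/p_y|<1$, this admits exactly the two $2\pi$-periodic families written in Eq.~(\ref{attract}). For (c), I would linearise by setting $\Omega = \Omega_\pm + \delta\Omega$, yielding $\delta\Omega' = -2p_y\cos(\Omega_\pm)\,\delta\Omega$; using $\cos\Omega_\pm = \mp k/p_y$ with $k = \sqrt{p_y^2 - E^2}>0$, one finds $\delta\Omega \propto e^{\pm 2k x}$ near $\Omega_\pm$, and reading off the sign of the exponent as $x\to\pm\infty$ immediately gives the attractor/repellor assignments stated in (c).

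For (a), I would exploit that Eq.~(\ref{supereq}) is asymptotically autonomous. Given any $\varepsilon>0$, choose $X$ so that $|U(x)|<\varepsilon$ for $|x|>X$. Outside arbitrarily small neighbourhoods of the $\Omega_\pm$ the free right-hand side $-2(E+p_y\sin\Omega)$ is bounded away from zero, so for $|x|>X$ the full right-hand side of Eq.~(\ref{supereq}) retains a definite sign between any two consecutive stationary points. Hence $\Omega(x)$ is eventually monotone within one such strip, the limits $\Omega(\pm\infty)$ exist, and they must satisfy $\sin\Omega = -E/p_y$, which is exactly the statement that solutions approach stationary points of the free equation.

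For (d), I would construct the left separatrix $\Omega_l$ from the integral reformulation of Eq.~(\ref{supereq}) linearised about $\Omega_-$,
\begin{equation*}
\Omega_l(x) = \Omega_- + 2\int_{-\infty}^{x}\!\bigl[U(x') - p_y(\sin\Omega_l(x') - \sin\Omega_-)\bigr]\,dx',
\end{equation*}
whose convergence follows from a contraction argument; the leading iterate reproduces the explicit formula (\ref{domex}) already used in the preceding theorem, and uniqueness within each $2\pi n$-branch is enforced by the $e^{-2k|x|}$ decay along the repellor's (time-reversed) stable manifold. The right separatrix $\Omega_r$ is built analogously by backward integration from $\Omega_+$ at $+\infty$. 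Finally for (e), the boundary conditions (\ref{asymptotes}) of a bound state demand $\Omega(-\infty) = \Omega_-$ and $\Omega(+\infty) = \Omega_+$ for some choice of $2\pi n$-branches, so it lies simultaneously in the left and right separatrix families; conversely, any solution belonging to both families satisfies (\ref{asymptotes}). Existence of a bound state is therefore equivalent to the coincidence of a left and a right separatrix, i.e.\ to the degeneracy of the two separatrix families.

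The step I expect to require the most care is (a): before $U(x)$ has become negligible, one must exclude the possibility that $\Omega(x)$ winds through infinitely many $2\pi$-copies and fails to have a limit. This is controlled by observing that once $|x|>X$ the monotonicity of the autonomous flow between stationary points, combined with the uniform bound $|U|<\varepsilon$, traps $\Omega(x)$ permanently inside one strip bounded by two adjacent stationary points, after which the linearisation of (c) takes over.
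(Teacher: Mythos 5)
Your proposal is correct and follows essentially the same route as the paper: linearisation about the stationary points $\Omega_\pm$ with $p_y\cos\Omega_\pm = \mp k$ giving the $e^{\pm 2kx}$ attractor/repellor behaviour, construction and uniqueness of the separatrices from the exponential-kernel integral formula (the paper solves the linearised equation explicitly where you invoke a contraction, but the content is the same), and the identification of bound states with the merging of the two separatrix families via the boundary conditions~(\ref{asymptotes}). If anything, your treatment of item {\bf a} (trapping $\Omega(x)$ in a single strip between consecutive stationary points once $|U|<\varepsilon$, so that the limit exists) is more careful than the paper's one-line appeal to the free-motion asymptotics.
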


\begin{proof}[{\bf Proof}]
	\begin{itemize}
		\item[{\bf a}]
				
		Consider the free motion equation:
		\begin{equation}
		\Omega'(x) = -2 p_y \left(\sin \Omega(x) + \frac{E}{p_y} \right).
		\label{freemotion}
		\end{equation}
		This equation has stationary points $\Omega(x) \equiv const$ when $\sin \Omega = -E/p_y$. Every solution of Eq.~(\ref{freemotion}) comes to $\Omega_+$ ($\Omega_-$) at $x \to -\infty$ ($x \to +\infty$), where $\Omega_\pm$ are defined according to~(\ref{attract}). Moreover, $\Omega_\pm$ are solutions by itself. However, there are no physical solutions amid the solutions of the free motion equation because it is impossible to satisfy physical boundary conditions~(\ref{asymptotes}).
		
		If we have $U(x) \to 0$, $x \to \infty$, asymptotics of solutions at infinity resemble those of the free motion equation. Thus, {\bf a} is proven.
		
		\item[{\bf b}]
		Two families of stationary points of the free motion equation (which present the whole set of attractors and repellors of Eq.~(\ref{supereq})) obviously arise from the equation $\sin \Omega_\pm = -E/p_y$.
		
		\item[{\bf c}]
		Let us demonstrate that $\Omega_+$ are repellors at $x \to +\infty$ and attractors at $x \to - \infty$. Consider the solution which comes closely to $\Omega_+$ at some point $x^*$. Represent it in the form $\Omega(x) = \Omega_+ -\epsilon + \delta \Omega(x)$, $\delta\Omega(x^*) = 0$, where $\epsilon$ is a small deviation from $\Omega_+$ at $x = x^*$. Substitute it into Eq.~(\ref{supereq}) and expand $\sin \Omega(x)$ via smallness of $\delta\Omega(x)$ at the vicinity of $x^*$:
		\begin{equation}
		\delta\Omega'(x) \approx 2 U(x) + 2 k \cdot  (\delta \Omega(x) - \epsilon),
		\end{equation}
		where we accounted that $p_y \cos \Omega_+ = -k$, $k=\sqrt{p_y^2 - E^2}>0$.
		The solution with the appropriate boundary condition is:
		\begin{eqnarray}
		\delta\Omega(x) = 2 \int \limits_{x^*}^x U(x') e^{2 k (x - x')} \, dx' + \nonumber \\
		+\epsilon \cdot (1-e^{2k(x-x^*)}).
		\label{variations}
		\end{eqnarray}
		
		In the region $x > x^*$ both terms in~(\ref{variations}) give exponential divergence at $x \to + \infty$ ($x-x' \ge 0$ under the integral). So, the solution which approaches $\Omega_+$ (up to some arbitrarily small value $\epsilon$) runs away exponentially. It proves the statement that $\Omega_+$ are repellors at $x \to +\infty$.
		
		In the region $x < x^*$, $\delta \Omega(x) \to \epsilon$ exponentially fast ($x-x' \le 0$ under the integral) when $x\to -\infty$ and hence $\Omega(x) \to \Omega_+$.  It proves that $\Omega_+$ are attractors at $x \to - \infty$.
		
		 We can prove the statement for $\Omega_-$ in {\bf c} by analogy. For this, we just notice the change of sign in exponents because $p_y \cos \Omega_- = k$.
		
		We have to remark that we can finely adjust the constant $\epsilon$ to cancel out the exponential divergence from the integral part of~(\ref{variations}) at $x \to +\infty$. As we can see below, such solutions indeed exist!
		
		\item[{\bf d}]
		As it follows from {\bf c}, asymptotes $\Omega_+$ ($\Omega_-$) are unstable at $x \to +\infty$ ($x \to -\infty$). However, we require the solutions to satisfy one of the initial conditions~(\ref{separatrix}). We call such solutions left and right separatrices because they separate all solutions by regions. For example, the separatrix $\Omega_r$ separates solutions which are above and below its value $\Omega_+$ at $+\infty$ according the fact that $\Omega_+$ is a repellor at $+\infty$.
		
		Let us demonstrate that once we fixed one of the conditions~(\ref{separatrix}) it defines the only solution. To be more specific, consider $\Omega_r(x)$. To demonstrate the existence of such solution we need to set $x^* = +\infty$ and $\epsilon = 0$ in the previous item. Then $\Omega_r(x) = \Omega_+ + \delta\Omega_r(x)$ where at $x\to +\infty$ we can write by analogy with~(\ref{variations})
		$$
		\delta\Omega_r(x) = 2 \int \limits_{+\infty}^x U(x') e^{2 k (x - x')} \, dx',
		$$
		where $\delta\Omega_r(x) \to 0$ at $x\to +\infty$ which proves the existence of the solution. To show its uniqueness, we suppose two solutions with the same condition $\Omega_{1, 2}(x) \to \Omega_+$ at $x \to +\infty$ and consider its difference $\delta\Omega = \Omega_2 - \Omega_1$ which continuously tends to zero at $x\to +\infty$. While $\delta\Omega$ is small it satisfies the equation:
		$$
		\delta \Omega' = -2 p_y \cos\Omega_1(x) \cdot \delta\Omega
		$$
		with solution:
		$$
		\delta\Omega(x) = \delta\Omega(x_0)\cdot e^{-2 p_y \int\limits_{x_0}^x \cos \Omega_1(x') \, dx'},
		$$
		where $x \le x_0 \to +\infty$. While $x_0$ is fixed we use the limit relation $p_y \cos\Omega_1(x) \to -k$ at $x \to +\infty$ which exposes the exponential divergence at any non-zero $\delta\Omega(x_0)$, ergo $\delta\Omega(x) \equiv 0$.
		
		It should be emphasized that the uniqueness of solutions with the conditions~(\ref{separatrix}) is not valid if $E = \pm p_y$ since $k = 0$.
		
		\item[{\bf e}] Compare now the boundary conditions~(\ref{asymptotes}) for solutions that correspond to physical states with initial conditions~(\ref{separatrix}) for two families of separatrices. The physical solution must fulfill both conditions which is possible only when two separatrix families merge. Thence, the bound state problem is equivalent to the degeneracy of separatrices of Eq.~(\ref{supereq}).
		
	 Notice that the physical solutions are stated by degenerated separatrices, and the corresponding parameter $E$ when the degeneracy occurs is the discrete energy level in a given potential $U(x)$.
	\end{itemize}
\end{proof}

Remark that we denote as $\Omega_l$, $\Omega_r$ the whole families of separatrices. If we need some particular function from a family, we indicate the dependence from $x$: $\Omega_l(x)$, $\Omega_r(x)$. Again, we use notations $\Omega_+$, $\Omega_-$ to describe the whole families of attractors and repellors if we do not indicate explicitly some particular point from these families.

\begin{lemma}[of boundedness]
	Let $U(x)\to 0$ at $x \to \pm \infty$. Let the primitive integral $f_{x_0}(x) = \int_{x_0}^{x} U(x') \, dx'$ of the potential $U(x)$ be a continuous function and the limit $\lim\limits_{x\to \pm \infty} f_{x_0}(x)$ exists (maybe, infinite). Then:
	
\begin{itemize}
	\item[{\bf a}]
	Any solution of Eq.~(\ref{supereq}) is a bounded function for any parameter $E \in (-p_y, p_y)$.
	\item[{\bf b}]
	If $|\lim\limits_{x\to \pm \infty} f_{x_0}(x)| < \infty$, then all solutions of Eq.~(\ref{supereq}) are bounded functions for any parameter $E \in [-p_y, p_y]$.
\end{itemize}	
\end{lemma}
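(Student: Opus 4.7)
The plan is to split boundedness on $\mathbb{R}$ into boundedness at $\pm\infty$ and on compact intervals. Continuity of $\Omega$ (Lemma~1) handles compacts for free, so only the asymptotic regions are at issue. By the left/right symmetry of the hypotheses I may treat $x \to +\infty$ only.

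For part (\textbf{a}), with $k = \sqrt{p_y^2 - E^2} > 0$, I would show that $\Omega(x)$ converges to some attractor $\Omega_-^{(m)}$ from the family of Lemma~2. Near such a point the linearisation of Eq.~(\ref{supereq}) (already derived in the proof of Lemma~2(c)) reads $\delta\Omega' = 2U - 2k\,\delta\Omega$, whose particular solution
\[
\delta\Omega(x) \;=\; 2\int_{-\infty}^{x} e^{-2k(x-x')}\, U(x')\,dx'
\]
is uniformly bounded and tends to zero as $x\to+\infty$ by dominated convergence, even when $f_{x_0}(x)\to\pm\infty$: the exponential kernel damps distant-past contributions while $U(x')\to 0$ suppresses the recent ones. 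A Gronwall-type argument then traps the full nonlinear solution in a shrinking neighbourhood of $\Omega_-^{(m)}$ once it has entered a sufficiently small one. To force this entrance I would complement with a barrier argument: for $x$ so large that $|U(x)|\ll k$, the perturbation is too weak to carry $\Omega$ across any repellor, and the monotone ``attractor pull'' of the free equation drives $\Omega$ into the nearest attractor basin in finite time.

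For part (\textbf{b}) the case $|E|<p_y$ follows from (\textbf{a}); the new case is $E = \pm p_y$, where $k=0$ and the attractors become semi-stable. Taking $E = p_y$ (the other sign is analogous), set $\omega(x) = \Omega(x) - 2 f_{x_0}(x)$. A short computation gives
\[
\omega'(x) \;=\; -2 p_y \bigl(1 + \sin \Omega(x)\bigr) \;\le\; 0 ,
\]
so $\omega$ is monotone non-increasing and admits a (possibly infinite) limit at $+\infty$. Because $|L_+|<\infty$, boundedness of $\Omega = \omega + 2f_{x_0}$ reduces to $\omega$ being bounded below, i.e.\ to $\int_{x_0}^{+\infty}(1+\sin\Omega)\,dx<\infty$. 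The equation for $\omega$ is asymptotically autonomous with limit $\omega' = -2p_y(1+\sin(\omega + 2L_+))$, whose orbits connect two consecutive semi-stable equilibria and are therefore bounded. An adiabatic tracking argument -- justified because the instantaneous semi-stable equilibria $\omega + 2f_{x_0}(x) = -\pi/2 + 2\pi n$ move only by a bounded amount as $f_{x_0}$ varies between its finite limits -- then forces $\omega$ to settle near one of them, making $1+\sin\Omega$ integrable.

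The main obstacle I anticipate is the ``non-winding'' step at both ends, namely ruling out that $\Omega$ passes through infinitely many stationary points. In (\textbf{a}) the delicate point is that $|L_\pm|$ is allowed to be infinite, so one must beat a divergent forcing using only the exponential rate $2k$ at attractors. In (\textbf{b}) with $E=\pm p_y$ the attraction is merely algebraic (since $k=0$), so the exponential argument of (\textbf{a}) is replaced by the adiabatic tracking sketched above; turning this into a rigorous proof that $\int(1+\sin\Omega)\,dx<\infty$, using only finiteness of $L_\pm$, is where the real work lies.
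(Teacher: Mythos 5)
Your part (\textbf{a}) is essentially the paper's own argument: both hinge on the linearisation $\delta\Omega' = 2U - 2k\,\delta\Omega$ about the hyperbolic attractors and on the fact that the kernel $e^{-2k(x-x')}$ keeps $\delta\Omega$ of order $\sup_{x'>x_0}|U(x')|/k$, which is small once $x_0$ is large. The paper phrases this as a contradiction (an unbounded $\Omega$ must cross an attractor value, after which it cannot leave its neighbourhood), while your Gronwall-plus-barrier formulation makes the trapping explicit; these are the same mechanism, and your version is if anything the more careful one.

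Part (\textbf{b}) is where you genuinely diverge from the paper, and where there is a real gap. The reduction via $\omega = \Omega - 2f_{x_0}$, $\omega' = -2p_y(1+\sin\Omega)\le 0$, is clean and is not in the paper; it correctly converts boundedness into the lower bound $\int^{+\infty}(1+\sin\Omega)\,dx<\infty$. But the step you defer to ``adiabatic tracking'' is the entire content of the statement, and the two ingredients you offer do not deliver it. Asymptotic autonomy does not prevent a solution from slipping past infinitely many semi-stable equilibria of $\omega'=-2p_y\bigl(1+\sin(\omega+2f_{x_0})\bigr)$: each slip costs a drop of $2\pi$ and is driven by the sign of $U$ near the degenerate point, not by the displacement of the equilibrium curves (which is indeed bounded, but irrelevant, since the solution is free to cross them). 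The paper attacks exactly this point by a different device: it expands Eq.~(\ref{supereq}) to second order about the degenerate stationary point, obtaining the Riccati equation $\delta\Omega' = 2U - \mu p_y\,\delta\Omega^2$, equivalently the zero-energy Schr\"odinger equation $\psi'' = 2\mu p_y U\psi$ under $\psi=\exp(\mu p_y\!\int\!\delta\Omega\,dx)$, and argues by dominant balance that $\delta\Omega$ converges; a $2\pi$ slip is a node of $\psi$, so the claim is that the threshold solution has finitely many nodes. That this is delicate, and not a soft consequence of $f_{x_0}$ having finite limits, is shown by borderline decay $U\sim -c/x^{2}$: it satisfies every hypothesis of the lemma, yet turns $\psi''=2p_yU\psi$ into an Euler-type equation whose solutions oscillate indefinitely once $c$ exceeds a threshold --- precisely the regime of levels accumulating at $E=\pm p_y$. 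Any completion of your argument must therefore engage with the nodal count of this threshold Riccati/Schr\"odinger problem, as the paper does (heuristically, via its ``three scenarios''); the monotonicity of $\omega$ alone cannot close the proof.
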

\begin{proof}[{\bf Proof}]
	\begin{itemize}
	\item[{\bf a}]
	First, consider the situation when $k \ne 0$ or $E \in (-p_y, p_y)$.
	
	Continuity of $f_{x_0}(x)$ results in $\Omega(x)$ being a continuous function as to {\bf Lemma 1}. 
	Suppose that $\Omega(x)$ diverges at $+\infty$. From continuity, we always can find an arbitrarily large positive $x_0$ where $p_y \cos \Omega(x_0) = k >0$. We expand $\Omega(x)$ at the vicinity of $x_0$: $\Omega(x) = \Omega(x_0) + \delta \Omega(x)$.
	Up to the first order of $\delta\Omega$ we have:
	\begin{equation}
	\delta\Omega'(x) = 2 U(x) - 2 k \cdot \delta\Omega(x),
	\label{lemma3pr}
	\end{equation}
	which yields the solution:
	\begin{equation}
	\delta\Omega(x) = 2\int \limits_{x_0}^x U(x') e^{-2 k (x - x')} \, dx'.
	\label{sollem3}
	\end{equation}
	We clearly see that $\delta\Omega(x)$ converges at $x \to +\infty$ even at arbitrarily small $k > 0$. Hence, we arrived at the contradiction with our initial assumption of the unboundedness of $\Omega(x)$ at $+\infty$.
	
	By analogy, one can prove the boundedness of any solution of Eq.~(\ref{supereq}) at $x \to -\infty$. Here we will choose an arbitrary large negative $x_0$ where $p_y \cos \Omega(x_0) = -k$.
	
	Notice that $\delta\Omega(+\infty) = 0$; we integrate Eq.~(\ref{lemma3pr}) and substitute~(\ref{sollem3}) into the right-hand side. It yields:
	\begin{eqnarray*}
			&&\int\limits_{x_0}^{+\infty} \delta\Omega(x) \, dx = \\ 
			&&= 2 \int\limits_{x_0}^{+\infty} \int \limits_{x_0}^x U(x') e^{-2k(x-x')} \, dx\, dx'=\frac{f_{x_0}(+\infty)}{k}.
	\end{eqnarray*}
	On the other hand, the direct integration of Eq.~(\ref{lemma3pr}) results in:
	$$
		\delta\Omega(+\infty) = 2 f_{x_0}(+\infty) - 2k \int\limits_{x_0}^{+\infty} \delta\Omega(x) \, dx.
	$$
	Hence, $\delta\Omega(+\infty) = 0$ or $\Omega(+\infty) = \Omega(x_0)$. This result is not surprising because we intentionally chose $x_0$ in that way to satisfy $\Omega(x_0) = \Omega_-$ which is attractor at $x \to +\infty$.

	\item[{\bf b}]
	If $f_{x_0}(x)$ has finite limits at $x \to \pm \infty$, one can show that solutions of Eq.~(\ref{supereq}) are bound on the closed interval $E \in [-p_y, p_y]$. To show this, we need to check what happens on the boundaries of the continuum when $E= \mu p_y$, $\mu = \pm 1$, $k=0$. 
	
	As in item {\bf a}, we assume that $\Omega(x)$ diverges at $x \to +\infty$, thus, we can write $\Omega(x) = \Omega(x_0) + \delta\Omega(x)$, $\sin\Omega(x_0) = \mu$ where $x_0$ can be an arbitrarily large positive number. In Eq.~(\ref{lemma3pr}) we omitted summands of order $\delta\Omega^2$ and higher because $k\ne0$. In this case we have to account for the first non-zero term that is quadratic in $\delta\Omega$:
	\begin{equation*}
	\delta\Omega'(x) = 2 U(x) - \mu p_y \delta\Omega^2(x).
	\end{equation*}
	This equation resembles that of a non-relativistic limit with zero non-relativistic energy.
	
	There are three possible scenarios of the behavior at $+\infty$. The first one, $\delta \Omega^2(x) \sim U(x)$, $x \to +\infty$, gives explicit convergence of $\delta\Omega$ since $U(x) \to 0$, $x \to +\infty$. The second one corresponds to $\delta\Omega^2(x) \sim \delta\Omega'(x)$ which provides the convergence $\delta\Omega \sim 1/x$. The last situation is $\delta\Omega'(x) \sim U(x)$ which gives the convergence if and only if $f_{x_0}(x)$ converges at infinity.
	
	Hence, any solution of Eq.~(\ref{supereq}) is bounded at any parameter $E \in [-p_y, p_y]$ as soon as $f_{x_0}(x)$ is continuous and converges at infinity. 
	\end{itemize}
\end{proof}

As it can be seen from {\bf Lemma 2}, we are interested in the separatrix solutions because only these solutions are related to physical ones. For all further discussions we choose the family of left separatrices $\Omega_l$. We are going to show that the total variance:
$$
\Delta\Omega_l(E) = \Omega_l(+\infty) - \Omega_l(-\infty)
$$
as a function of energy contains the full information of the discrete spectrum. It is stated in the following

\begin{Lev}[Levinson]
Let $f_{x_0}(x)$ be a continuous function which converges at infinity, $E \in [-p_y, p_y]$. Then:

\begin{itemize}
	
	\item[{\bf a}] 
	$\Delta \Omega_l(E)$ is a bounded function on the interval $E \in [-p_y, p_y]$.
	
	\item[{\bf b}] 
	$\Delta \Omega_l(E)$ is a multiple of $2 \pi$ if $E \notin \Spec(U, p_y)$, $\Spec(U, p_y)$ is a discrete specter of $U(x)$ at given $p_y$.
	
	\item[{\bf c}] 
	Any $E \notin \Spec(U, p_y)$ is a point of continuity of $\Delta \Omega_l(E)$.
	
	\item[{\bf d}]
	$\Delta \Omega_l(E)$ has finite jumps of $-2 \pi$ at every point $E_d \in \Spec(U, p_y)$:
	\begin{equation}
		\Delta\Omega_l(E_d+0)-\Delta\Omega_l(E_d -0) = -2\pi.
		\label{djump}
	\end{equation}

	\item[{\bf e}]
	The total number $N_d(p_y)$ of discrete levels of $U(x)$ at any given $p_y > 0$ is defined by:
	\begin{equation}
		\label{Totallevels}
		N_d(p_y)=\frac{\Delta\Omega_l(-p_y)-\Delta\Omega_l(p_y)}{2 \pi}.
	\end{equation}
	
\end{itemize}
\end{Lev}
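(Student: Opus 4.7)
The plan is to deduce (a)--(e) in sequence from Lemmas 1--3, viewing Eq.~(\ref{supereq}) as a flow on the $\Omega$-line whose attractors $\Omega_-^{(m)}(E)=-\arcsin(E/p_y)+2\pi m$ and repellors $\Omega_+^{(m)}(E)=\pi+\arcsin(E/p_y)+2\pi m$ at $\pm\infty$ were classified in Lemma~2. Part (a) is immediate from Lemma~3(b), which bounds every solution on the whole real line for every $E\in[-p_y,p_y]$; in particular $\Omega_l(\pm\infty;E)$ exist and are uniformly bounded, so $\Delta\Omega_l(E)$ is bounded. Part (b) follows because Lemma~2(a) forces $\Omega_l(+\infty;E)$ into the stationary-point set while Lemma~2(e) forbids it from coinciding with a repellor when $E\notin\Spec(U,p_y)$; hence $\Omega_l(+\infty;E)=\Omega_-^{(m(E))}(E)$ for some integer $m(E)$, and combined with $\Omega_l(-\infty;E)=-\arcsin(E/p_y)$ one reads off $\Delta\Omega_l(E)=2\pi m(E)$.

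For (c) I would use continuous dependence of the flow on $E$ together with the exponential contraction onto the attractors that was already derived inside Lemma~3. Fix $E_0\notin\Spec(U,p_y)$; using the representation $\delta\Omega(x)=2\int_{-\infty}^{x}U(x')e^{-2k(x-x')}\,dx'$ near $-\infty$, and its mirror near $+\infty$, one can choose $L$ so large that $\Omega_l(\pm L;E_0)$ lies strictly inside the basin of the appropriate attractor on each end, uniformly for $E$ in a small neighbourhood of $E_0$. Standard continuous dependence of ODE solutions on parameters over the compact interval $[-L,L]$ then shows $\Omega_l(\pm L;E)$ stays in the same basins for $E$ near $E_0$, which pins $m(E)$ to $m(E_0)$ and gives continuity of $\Delta\Omega_l$ at $E_0$.

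Part (d) is the main step. Differentiating Eq.~(\ref{supereq}) in $E$, the function $v(x)=\partial_E\Omega_l(x;E)$ satisfies
\[
v'(x)+2p_y\cos\Omega_l(x)\,v(x)=-2,\qquad v(-\infty)=-1/k,
\]
and at any hypothetical zero $v(x_1)=0$ one has $v'(x_1)=-2<0$, which together with $v(-\infty)<0$ forces $v(x)<0$ for all $x$ (a first crossing from negative values would require $v'\ge 0$, a contradiction). Hence $\Omega_l(+\infty;E)$ is strictly decreasing in $E$ on each continuity interval. At an eigenvalue $E_d$, Lemma~2(e) gives $\Omega_l(+\infty;E_d)=\Omega_+^{(m_d)}(E_d)$, which by the ordering in Lemma~2(b) lies strictly between the adjacent attractors $\Omega_-^{(m_d)}<\Omega_+^{(m_d)}<\Omega_-^{(m_d+1)}$; strict monotone downward passage through this single repellor therefore sends $m(E)=m_d+1$ just below $E_d$ to $m(E)=m_d$ just above, producing exactly the jump $-2\pi$ of Eq.~(\ref{djump}). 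Part (e) is then a one-line count: $m(E)=\Delta\Omega_l(E)/2\pi$ is an integer step function, continuous away from $\Spec(U,p_y)$ and dropping by one at each eigenvalue, so its total decrement from $-p_y$ to $p_y$ equals $N_d(p_y)$, which is Eq.~(\ref{Totallevels}).

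The main obstacle I expect is the \emph{unit magnitude} of the jump in (d): monotonicity alone rules out positive jumps but not a jump of $-4\pi$ or worse, and excluding this requires combining the uniqueness of separatrices from Lemma~2(d) with the strict alternation of attractors and repellors on the $\Omega$-cylinder from Lemma~2(b), via one-sided limits $E\to E_d\pm 0$, to argue that the neighbouring trajectories must land in precisely the two attractors immediately adjacent to the unique repellor hit at $E=E_d$. A secondary subtlety is the boundary case $E=\pm p_y$ where $k=0$, for which I would fall back on the quadratic-order analysis in the proof of Lemma~3(b) to guarantee that $\Omega_l(\pm\infty;\pm p_y)$ still exist and are finite, so that the right-hand side of Eq.~(\ref{Totallevels}) is unambiguously defined.
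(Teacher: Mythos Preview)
Your proposal is correct and follows essentially the same route as the paper: Lemma~3 for (a), Lemma~2 for (b), linearization of Eq.~(\ref{supereq}) in $E$ to get the sign of the variation in (c)--(d), and jump-counting for (e); the paper writes the explicit integral solution $\delta\Omega_l(x)=-2\epsilon\int_{-\infty}^{x}\exp\bigl(2p_y\int_x^y\cos\Omega_l\,dy'\bigr)\,dy$ of the same linear ODE you write for $v=\partial_E\Omega_l$, whereas you extract the sign by a first-crossing argument. For the unit-jump obstacle you correctly flag in (d), the paper does precisely what you anticipate: it introduces a scale $R(\alpha,\epsilon)\to+\infty$ at which $\Omega_l(R;E_d+\epsilon)\to\Omega_++\alpha$ with $\operatorname{sign}(\alpha)=-\operatorname{sign}(\epsilon)$ and $|\alpha|$ small, so that the asymptotic free flow drives the separatrix onto the \emph{immediately adjacent} attractor $\Omega_-$ on the appropriate side, pinning the jump to exactly $-2\pi$.
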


\begin{proof}[{\bf Proof}]
\begin{itemize}

\item[{\bf a}]	
We know from {\bf Lemma 3} that, under conditions of the theorem, $\Omega_l(x)$ is a bounded function on $x \in (-\infty, \infty)$ at any parameter $E \in [-p_y, p_y]$. In other words $\Delta \Omega_l(E)$ is finite for any $E \in [-p_y, p_y]$ or $\Delta \Omega_l(E)$ is bounded function of $E$.

\item[{\bf b}]
According to {\bf Lemma 2, e)}, two families $\Omega_l$, $\Omega_r$ of separatrices merge if and only if the parameter $E$ corresponds to some discrete energy level. Let $E \notin \Spec(U, p_y)$. Therefore $\Omega_l$ and $\Omega_r$ are disjoint families; $\Omega_l(x)$ starts from some $\Omega_-$ at $x=-\infty$ and comes to, perhaps, some other $\Omega_-$ from the family at $x=+\infty$. Otherwise $\Omega_l(x)$ must tend to $\Omega_+$ at $+\infty$ resulting in $\Omega_l(x) = \Omega_r(x)$ which violates our assumption that $E \notin \Spec(U, p_y)$. Hence, $\Delta\Omega_l(E)$ is a multiple of $2 \pi$.

\item[{\bf c}]
Let $E \notin \Spec(U, p_y)$ where it is natural to assume that $\Spec(U, p_y)$ is a discrete set. Then some $\delta$-vicinity of $E$ is disjoint with $\Spec(U, p_y)$,  $\delta > 0$. Let us consider how $\Omega_l(x, E)$ changes with small variation of the parameter $E$:
$$
\delta\Omega_l(x, E, \epsilon) = \Omega_l(x, E+\epsilon) - \Omega_l(x, E),
$$
where small $0 <|\epsilon| < \delta$. In contrast with the previous consideration where $E$ was fixed, we indicate here $E$ among variables of functions. Subtracting Eq.~(\ref{supereq}) for $\Omega_l(x, E+\epsilon)$ and $\Omega_l(x, E)$, we arrive at the equation for the variation function:
\begin{equation}
\delta\Omega_l' \approx - 2 \epsilon - 2 p_y \cdot \cos \Omega_l(x, E) \cdot \delta\Omega_l.
\label{leftsep}
\end{equation}
Remark that the initial condition depends on $\epsilon$ because:
\begin{equation}
\delta \Omega_l(-\infty, E, \epsilon) \! = \! \Omega_-(E+\epsilon) \! - \! \Omega_-(E) \!\approx \!-\frac{\epsilon}{k}.
\label{incond}
\end{equation}
The solution reads:
\begin{eqnarray}
&&\delta \Omega_l(x, E, \epsilon) = \nonumber\\
&& = - 2 \epsilon \int \limits_{-\infty}^x e^{2 p_y \int_x^y \cos \Omega_l(y', E) \, dy'} \, dy.
\label{varepsilon}
\end{eqnarray}

First, let's demonstrate that~(\ref{varepsilon}) meets the initial condition~(\ref{incond}).
According to~(\ref{separatrix}), we may approximate $p_y \cos\Omega_l(y', E) \to p_y \cos\Omega_- = k$ at $x \to -\infty$ because $y \le y' \le x$. Hence, at $x \to -\infty$ we see that:
$$
\delta \Omega_l(-\infty, E, \epsilon) \to  - 2 \epsilon \int \limits_{-\infty}^x e^{2 k (y -x)} \, dy = - \epsilon/k.
$$

Now we are ready to show the convergence of~(\ref{varepsilon}) at $+\infty$ and that $\delta\Omega_l(+\infty, E, \epsilon) = -\epsilon/k$. First, divide~(\ref{varepsilon}) into two parts: the first part is the $y$-integral where $-\infty < y < x_0$, the second part is  the $y$-integral where $x_0 < y < x$. $x_0 < x$ is big positive number such that we can use the approximation $p_y \cos \Omega_l(y', E) \approx p_y \cos \Omega_- = k$ while $y' > x_0$. The first part can be estimated at $x \to +\infty$ as follows:
\begin{eqnarray*}
&&-2 \epsilon \int\limits_{-\infty}^{x_0} e^{2 p_y (\int_x^{x_0} +\int_{x_0}^y)\cos \Omega_l(y', E) \, dy'} \, dy \approx \\
&&-2 \epsilon \int\limits_{-\infty}^{x_0} e^{2 p_y \int_{x_0}^y\cos \Omega_l(y', E) \, dy'} \, dy \cdot e^{-2 k (x - x_0)} = \\
&& = \delta\Omega_l(x_0, E, \epsilon) \cdot e^{-2 k (x - x_0)} \to  0.
\end{eqnarray*}
The second part gives the desirable limit $\delta\Omega_l(+\infty, E, \epsilon)$:
\begin{eqnarray*}
	&&-2 \epsilon \int\limits_{x_0}^x e^{2 p_y \int_x^y\cos \Omega_l(y', E) \, dy'} \, dy \approx \\
	&&\approx -2 \epsilon \int\limits_{x_0}^x e^{2 k (y - x)} \, dy \to  -\frac{\epsilon}{k}.
\end{eqnarray*}
Hence, $\delta\Omega_l(+\infty, E, \epsilon) = \delta\Omega_l(-\infty, E, \epsilon) = -\epsilon/k + O(\epsilon^2)$. We remark the equality of values of $\delta\Omega_l$ at $\pm \infty$ not just up to order of $\epsilon^2$ because we have proven here that the difference tends to zero with $\epsilon$. But according to item {\bf b} of this theorem, the difference must be a multiple of $2 \pi$ whence the only one opportunity is possible. Finally, we conclude that:
\begin{eqnarray*}
&&\Delta\Omega_l(E+\epsilon) - \Delta\Omega_l(E) = \\
&&= \delta\Omega_l(+\infty, E, \epsilon) - \delta\Omega_l(-\infty, E, \epsilon) = 0.
\end{eqnarray*}

Hence, we proved that any $E \notin \Spec(U, p_y)$ is the point of continuity of the function $\Delta\Omega_l(E)$. {\bf We also proved that } $\Delta\Omega_l(E)$ {\bf is a piecewise-constant function with only possible discontinuity points from} $\Spec(U, p_y)$.

We emphasize that the statement of this item is true even for the boundaries of continuum where $E = \pm p_y$ since $E = \pm p_y$ are not limit points of $\Spec(U, p_y)$ (see the {\bf Remark 1}). For example, for $E = p_y$ we take 
$$
\delta \Omega_l(x, E=p_y, \epsilon) = \Omega_l(x, p_y - \epsilon) - \Omega_l(x, p_y),
$$
where $\epsilon \approx k^2/(2 p_y) \to +0$. Then the condition~(\ref{incond}) is valid because $\epsilon/k \approx k/(2 p_y) \to 0$. 

\item[{\bf d}]
Now we understand the behavior of $\Delta\Omega_l(E)$ when $E \notin \Spec(U, p_y)$. In this item we consider the situation when $E = E_d \in \Spec(U, p_y)$ where we assume that $\Spec(U, p_y)$ is a discrete set or each element is an isolated point. As it follows from {\bf Lemma 2, e)}, two separatrix families merge when $E=E_d$. We call these merged separatrices as $\Omega_d$ family. 

$E_d$ is an isolated point of $\Spec(U, p_y)$. Then $\delta > 0$ exists such that $\delta$-vicinity of $E_d$ does not contain any other points from $\Spec(U, p_y)$ except $E_d$. Let us consider the variation function:
$$
\delta\Omega_l(x, E_d, \epsilon) = \Omega_l(x, E_d+\epsilon) - \Omega_d(x, E_d),
$$
where $\epsilon$ can be arbitrarily small, $0 <|\epsilon| < \delta$ . Afterwards, we repeat the procedure from item {\bf c} of the theorem which gives exactly the same initial condition~(\ref{incond}) and in Eq.~(\ref{leftsep}) we need to substitute $\Omega_l(y', E) \to \Omega_d(y', E_d)$. Thence the approximate solution for $\delta\Omega_l(x, E_d, \epsilon)$ reads:
\begin{eqnarray}
&&\delta \Omega_l(x, E_d, \epsilon) = \nonumber\\
&& = - 2 \epsilon \int \limits_{-\infty}^x e^{2 p_y \int_x^y \cos \Omega_d(y', E_d) \, dy'} \, dy.
\label{vardepsilon}
\end{eqnarray}

But analysis of Eq.~(\ref{vardepsilon}) at $x \to +\infty$ gives different result from those of Eq.~(\ref{varepsilon}). The reason is that $\Omega_d(x, E_d)$ comes to $\Omega_+$ at $x \to +\infty$ as per the conditions~(\ref{asymptotes}). This gives $p_y \cos \Omega_d(+\infty, E_d) = p_y \cos\Omega_+ = -k$d which results in the exponential divergence of $\delta\Omega_l(x, E_d, \epsilon)$ at $x \to +\infty$ for any $|\epsilon| > 0$. Formally, this divergence indicates instability of the solution $\Omega_d(x, E_d)$ towards infinitely small variations from the parameter $E_d$. This conclusion is already obvious because we know that at $E = E_d + \epsilon$ we have two disjoint families of separatrices and our separatrix $\Omega_l$ tends to $\Omega_-$ at $x \to +\infty$.

The non-trivial conclusion which can be drawn from~(\ref{vardepsilon}) is that:
\begin{equation}
\sign\left( \delta \Omega_l \right)= - \sign( \varepsilon).
\label{signs}
\end{equation} 
We are going to show that it leads to~(\ref{djump}).

We can use the approximate solution~(\ref{vardepsilon}) at the region $x < R$ if the condition $\delta\Omega_l(x<R, E_d, \epsilon) \ll 1$ is met. Fix some small value of $\delta\Omega_l$:
$$
\delta\Omega_l(R, E_d, \epsilon) \equiv \alpha.
$$
It means that $R$ is a function of two parameters $\alpha$ and $\epsilon$ and $R(\alpha,\epsilon) \to +\infty$ at fixed $\alpha$ and $\epsilon \to 0$. Introduce the following variance: 
$$
 \delta\Omega_d=\Omega_d(R(\alpha, \epsilon), E_d) - \Omega_+,
$$
where $\delta\Omega_d \to 0$ at $R \to +\infty$.
Finally, we have for the left separatrix:
\begin{eqnarray*}
&&\Omega_l(R(\alpha,\epsilon), E_d+\epsilon) = \Omega_+ + \delta\Omega_d + \alpha,
\end{eqnarray*}
where $\alpha$ is fixed and $\delta\Omega_d \to 0$ at $\epsilon \to 0$ or equivalently:
$$
\Omega_l(R(\alpha,\epsilon), E_d+\epsilon) \to \Omega_+ + \alpha
$$
at $\epsilon \to 0$ and arbitrarily small but fixed $\alpha$. According to the definition of $\alpha$ and Eq.~(\ref{signs}), we get 
$$
\sign(\alpha) = -\sign(\epsilon).
$$
It means that at $\epsilon > 0$ ($\epsilon < 0$) the left separatrix $\Omega_l(R, E_d+\epsilon) < \Omega_+$ ($\Omega_l(R, E_d+\epsilon) > \Omega_+$) at $R \to +\infty$ and ergo $\Omega_l(x, E_d+\epsilon)$ falls onto the asymptote $\Omega_-$ which is right under (above) the asymptote $\Omega_+ = \Omega_d(+\infty, E_d)$.
Thence:
$$
\Omega_l(+\infty, E_d + 0) - \Omega_l(+\infty, E_d - 0) = - 2 \pi
$$
or equivalently:
$$
\Delta\Omega_l(E_d+0)-\Delta\Omega_l(E_d -0) = -2\pi.
$$
We used the fact that here $\Omega_l(-\infty, E_d+0)= \Omega_l(-\infty, E_d-0)$.

One can show by analogy that the right separatrix experiences jumps with the same sign:
$$
\Delta\Omega_r(E_d+0)-\Delta\Omega_r(E_d -0) = -2\pi.
$$
In this sense, the right separatrix does not give any additional information about the discrete spectrum.

\item[{\bf e}]
We proved that the function $\Delta\Omega_l(E)$ is a bounded piecewise-constant function which experiences final jumps of $-2 \pi$ at every point $E_d$ of discrete spectrum of the confinement $U(x)$. $\Delta\Omega_l(E)$ is continuous at any other points where $E \notin \Spec(U, p_y)$.

It allows us to calculate the total number of discrete levels as the difference of $\Delta\Omega_l(E)$ on the ends of the interval $[-p_y, p_y]$ which immediately gives Eq.~(\ref{Totallevels}).

However, we understand $\Delta\Omega_l(\pm p_y)$ only in the sense of the limit relation $\Delta\Omega_l(\pm p_y) = \lim \limits_{\epsilon \to +0} \Delta\Omega_l(\pm (p_y- \epsilon))$ because separatrices are not well defined at the boundaries of the continuum as to {\bf Lemma 2}.

\end{itemize}
\end{proof}

\begin{remark1}[for the Levinson Theorem]
We need to remark that assumptions made in the head of the Levinson theorem provide that $\Spec(U, p_y)$ is discrete set. Indeed, assume that $\Spec(U, p_y)$ has one limit point $E_0 \in [-p_y, p_y]$. It means that infinitesimal vicinity of this point contains an infinite number of isolated points from $\Spec(U, p_y)$. But for any isolated point, the item {\bf d} of the theorem is valid which leads to $\Delta\Omega_l(E \to E_0) \to \infty$; this contradicts with the item {\bf a} of the theorem of boundedness of this function for any $E \in [-p_y, p_y]$. Hence, $\Spec(U, p_y)$ does not contain limit points.
\end{remark1}

\begin{remark1}[for the Levinson Theorem]
	Even if $|\lim \limits_{x\to\pm \infty} f_{x_0}(x)| = \infty$, all proofs and statements of the Theorem are valid for open interval $E \in (-p_y, p_y)$ because $k = \sqrt{p_y^2 - E^2} > 0$. However, at least one of the points $E = \pm p_y$ is limit point of $\Spec(U, p_y)$ which makes $\Delta\Omega_l(E)$ unbound on the closed interval $E\in [-p_y, p_y]$.
\end{remark1}

\begin{remark1}[for the Levinson Theorem]
	One can get the number of discrete levels between any two given energies $|E_{1, 2}| \le p_y$, $E_{1, 2} \notin \Spec(U, p_y)$:
	\begin{equation}
	\label{Numlevels}
	N_d(p_y, E_1, E_2)=\left|\frac{\Delta\Omega_l(E_2)-\Delta\Omega_l(E_1)}{2 \pi} \right|.
	\end{equation}
\end{remark1}

Hence, the function $\Delta \Omega_l(E)$ plays the same role as the scattering phase in the non-relativistic theory. In other words, the theorem represents the relativistic Levinson theorem for the 2D Dirac equation with the 1D potential.
\\

{\it Example for $\delta$-potential}

Finally, we give an example for the simple case of the $\delta$-potential $U(x) = G \cdot \delta(x)$. Let us demonstrate that the total number of discrete levels $N_d(p_y) = 1$ at any $p_y \ne 0$ and $G \ne \pi n$, $n$ is integer, $N_d$ is defined by Eq.~(\ref{Totallevels}). We need to consider Eq.~(\ref{supereq}) only at $E= \pm p_y$.

All solutions of Eq.~(\ref{supereq}) are constructed from solutions of the free motion equation~(\ref{freemotion}) separately at $x<0$ and $x>0$ with the matching condition
\begin{equation}
	\Omega(+0) = \Omega(-0) + 2 G.
\end{equation}

We first analyze the solutions of Eq.~(\ref{freemotion}).
If $E = p_y$, then we have $\Omega'(x) = -2 p_y (1 + \sin \Omega) \le 0$ and $\Omega'(x) = 0$ only for the case of stationary points $\Omega_0 \equiv \Omega_\pm = -\pi/2 + 2 \pi n$. Hence, all non-stationary solutions of Eq.~(\ref{freemotion}) decrease strictly monotonically from some stationary point $\Omega_0 + 2\pi$ at $x = -\infty$ to $\Omega_0$ at $x = +\infty$. Notice that two families of stationary points merge at $E = \pm p_y$. 

In the case $E = - p_y$ all non-stationary solutions of Eq.~(\ref{freemotion}) increase strictly monotonically from some stationary point $\Omega_0 - 2\pi$ at $x = -\infty$ to $\Omega_0$ at $x = +\infty$. 

Represent the confinement strength in the following form:
$$
G = \pi (n_G + \delta n_G),
$$ 
where $n_G$ is integer and $\delta n_G \in(0, 1)$. Then:
$$
\Omega_l(x < 0, \pm p_y) = \Omega_-(\pm p_y)
$$ 
and 
$$\Omega_l(+0, \pm p_y) = \Omega_-(\pm p_y) + 2 \pi n_G + 2 \pi \delta n_G,
$$
where $\Omega_0 = \Omega_- + 2 \pi n_G$ is stationary point and $2 \pi \cdot \delta n_G \in (0, 2\pi)$ which means that $\Omega_l(x, \pm p_y)$ at $x>0$ comes along some non-stationary solution which decreases (increases) at $E=p_y$ ($E = -p_y$), ergo $\Omega_l(+\infty, p_y) = \Omega_0$ ($\Omega_l(+\infty, -p_y) = \Omega_0 + 2\pi$) at $E=p_y$ ($E = -p_y$). Equivalently, $\Delta\Omega_l(p_y) = 2 \pi n_G$ and $\Delta\Omega_l(-p_y) = 2 \pi n_G + 2 \pi$. Hence, $N_d(p_y) = 1$.

\section{Geometrical interpretation of the relativistic Levinson theorem}

The problem of bound states in graphene stripes can be analyzed similarly to what happens in mechanical autonomous systems. Let us consider the following system of equations:
\begin{equation}
\label{Autosys}
\left\{
\begin{array}{ll}
U'(x) = G(U) \\
\Omega'(x)=2 \left(U(x)- E\right) -2 p_y \sin {\Omega(x)},
\end{array}
\right.
\end{equation}
where the second equation here is just Eq.~(\ref{supereq}). We may consider that Eq.~(\ref{Autosys}) represents integral curves of some vector field
$$
{\bf F}(U, \Omega) = { G(U) \choose 2 \left(U - E\right) -2 p_y \sin {\Omega} },
$$
whereas the coordinate $x$ is just some parametrization of these curves. Though the system~(\ref{Autosys}) is not Hamiltonian as in usual mechanics, it is still an autonomous system of differential equations and, therefore, it can be analyzed in terms of the phase trajectories in so-called phase space $\mathfrak{D}$. In our case, the phase space $\mathfrak{D}$ is the $(U, \Omega)$-stripe:
$$
\mathfrak{D} = \{(U, \Omega)| U \in [\inf\limits_{x \in \mathbb R} U(x), \sup\limits_{x \in \mathbb R} U(x)], \Omega\in \mathbb R \},
$$
where $\mathbb R = (-\infty, +\infty)$.

However, our system~(\ref{Autosys}) is more complicated than usual autonomous systems. To see this, notice that the function $G(U)$ is different for each interval of monotonicity $I_j = [x_{j-1}, x_j]$ of $U(x)$. It means that we have different maps for each $I_j$ and we need to match these maps continuously. In other words, instead of one autonomous system we have the whole chain of systems:
\begin{equation}
{\bf F}_j(U, \Omega) = {U'(x) \choose \Omega'(x)} = {G_j(U) \choose 2 \left(U - E\right) -2 p_y \sin {\Omega}}
\label{jtheq}
\end{equation}
which are autonomous on the corresponding intervals $I_j$, $x\in I_j$ is some parametrization, and ${\bf F}_j(x_j) = {\bf F}_{j+1}(x_j)$. All trajectories of the field ${\bf F}_j$ fill the whole stripe:
$$
\mathfrak{D}_j = \{(\Omega , U)| U \in [\inf\limits_{x \in I_j} U(x), \sup\limits_{x \in I_j} U(x)], \Omega\in \mathbb R \}.
$$

Let us formulate the following
\begin{lemma}[of stationary points]
	Let $U(x) \in C^1$ have a finite number $N$ of monotonicity intervals $I_j = [x_{j-1}, x_j]$, $x_0 = -\infty < x_1 < \dots < x_{N-1 }< x_N = +\infty$. Let $U(x)$ be a strictly monotonic function on each $I_j$. Let $U(x) \to 0$ at $x \to \pm \infty$. Then:
	\begin{itemize}
		\item[{\bf a}]
		$U'(x) \to 0$ at $x \to \pm \infty$.
		\item[{\bf b}]
		Functions $G_j(U)$ are definite on corresponding intervals $I_j$, $j = 1, \dots, N$ and $G_1(0) = G_N(0) = 0$.	
		\item[{\bf c}]	
		The number of stationary points of $j$-th Eq.~(\ref{jtheq}) is exhausted by the following series:
		$$
		\left( U_\sigma, \arcsin \left(\frac{U_\sigma - E}{p_y} \right) + 2\pi n \right)
		$$
		or
		$$
		\left( U_\sigma, \pi -\arcsin \left(\frac{U_\sigma - E}{p_y} \right) + 2 \pi n\right)
		$$
		where $n$ is integer, $|U_\sigma - E| \le p_y$ and $G_j(U_\sigma)=0$. 
	\end{itemize}
\end{lemma}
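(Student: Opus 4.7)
My plan splits the lemma by its three parts, with (a) being the sole analytic step and (b), (c) amounting to bookkeeping once (a) is in hand. I would first settle (a) at $\pm\infty$, then read off (b) from the definition of $G_j$ via the inverse of $U|_{I_j}$, and finally solve two explicit algebraic equations to obtain (c).

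For part (a), I would focus on the outermost intervals $I_1$ and $I_N$, since on each compact interior interval $U'$ is simply a continuous function on a compact set and no limit statement is needed. On $I_N=[x_{N-1},+\infty)$, strict monotonicity forces $U'$ to be of a definite sign, so
$$
\int_{x_{N-1}}^{+\infty} |U'(x)|\, dx \;=\; \bigl|U(x_{N-1})-U(+\infty)\bigr| \;<\;\infty,
$$
using $U(+\infty)=0$. The plan is then to argue by contradiction: if $|U'(x_n)|\ge \varepsilon$ along some sequence $x_n\to+\infty$, continuity of $U'$ yields a neighborhood of each $x_n$ on which $|U'|\ge \varepsilon/2$, and, since $U'$ does not change sign on $I_N$, these contributions cannot all be absorbed by the finite integral above. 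The $-\infty$ case is symmetric. I expect this to be the main obstacle of the whole lemma, because squeezing $U'\to 0$ out of ``monotonic $+$ $C^1$ $+$ $U\to 0$'' requires some care about uniformity of continuity of $U'$ near infinity; the cleanest route exploits the finiteness of the monotonicity intervals explicitly, so that the outermost intervals are genuinely semi-infinite and eventually monotone without further sign changes.

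For part (b), on each $I_j$ strict monotonicity and $C^1$ regularity of $U(x)$ provide a continuous inverse $x=x_j(U)$ on the image $U(I_j)$, and I define $G_j(U):=U'(x_j(U))$, which is continuous on $U(I_j)$. On $I_1$ and $I_N$ the value $U=0$ corresponds to $x=\mp\infty$ respectively, so part (a) immediately yields $G_1(0)=G_N(0)=0$. As an aside that is not required by the statement but is useful for the matching of the autonomous subsystems, the interior breakpoints $x_j$, $1\le j\le N-1$, are local extrema and hence satisfy $U'(x_j)=0$ by $C^1$-regularity, providing a continuous matching of neighboring $G_j$'s at their shared boundary value $U(x_j)$.

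For part (c), a stationary point of the $j$-th system~(\ref{jtheq}) is by definition a common zero of both components of ${\bf F}_j(U,\Omega)$. The first equation $G_j(U)=0$ selects the finite set of values $U=U_\sigma\in U(I_j)$ at which $U'$ vanishes. The second equation becomes $\sin\Omega=(U_\sigma-E)/p_y$, which is solvable precisely when $|U_\sigma-E|\le p_y$; in that case, inversion of $\sin$ on $[-\pi/2,\pi/2]$ and on $[\pi/2,3\pi/2]$, extended by $2\pi$-periodicity, yields exactly the two families listed in the statement, completing the proof.
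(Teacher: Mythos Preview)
For parts (b) and (c) your argument matches the paper's exactly, only with more detail: the paper simply notes that strict monotonicity gives an inverse $x_j(U)$, sets $G_j(U)=U'(x_j(U))$, invokes (a) for $G_1(0)=G_N(0)=0$, and for (c) writes only ``follows from the solution of ${\bf F}_j(U,\Omega)=0$.'' For (a) the paper offers just the sentence ``It is straightforward from the monotonic behavior of $U(x)$ at infinity and $U(x)\to 0$,'' so your contradiction sketch already exceeds what the original supplies.

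That said, the obstacle you yourself flag in (a) is a genuine gap, not a technicality you can wave past. Continuity of $U'$ at each $x_n$ with $|U'(x_n)|\ge\varepsilon$ does yield a neighbourhood on which $|U'|\ge\varepsilon/2$, but the hypotheses place no lower bound on the lengths of those neighbourhoods; they can shrink fast enough that $\int|U'|$ remains finite. Concretely, set $U'(x)=-g(x)$ on $[0,\infty)$ with $g>0$ continuous, $\int_0^\infty g<\infty$, yet $g$ carrying a spike of height $1$ and width of order $1/n^2$ near each integer $n$; then $U$ is $C^1$, strictly decreasing on $[0,\infty)$, tends to a finite limit, and still $U'\not\to 0$. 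Thus claim (a) does not follow from the stated hypotheses, and neither your argument nor the paper's one-line assertion can be completed without an additional assumption (e.g.\ uniform continuity of $U'$ near infinity, or $U'\to 0$ postulated directly). Your proposal is therefore at least as sound as the paper's; the defect lies in the lemma's hypotheses rather than in your strategy.
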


\begin{proof}[{\bf Proof}]
	\begin{itemize}
		\item[{\bf a}]
		It is straightforward from the monotonic behavior of $U(x)$ at infinity and $U(x) \to 0$ at $x \to \infty$.
		
		\item[{\bf b}]
		$U(x)$ is strictly monotonic on each $I_j$, therefore an inverse function exists: $x_j(U)$. Thereby we get $G_j(U) = U'(x_j(U))$. 
		
		We know that $I_1 = (-\infty, x_1]$, $I_N = [x_{N-1}, +\infty)$ and $U'(x) \to 0$ at $x \to \pm \infty$ where $U(x) \to 0$. It immediately yields: $G_1(0) = \lim\limits_{x\to - \infty} U'(x) = 0$ and $G_N(0) = \lim\limits_{x\to +\infty} U'(x) = 0$.
		
		\item[{\bf c}]
		This statement follows from the solution of the equation: $$ {\bf F}_j (U, \Omega) = 0.$$
		
	\end{itemize}
\end{proof}

Further we call the whole chain of connected maps for ${\bf F}_j(U, \Omega)$ as ${\bf F}(U, \Omega)$ where each trajectory from $\mathfrak{D}$ corresponds to some solution of Eq.~(\ref{Autosys}). The properties of these trajectories are formulated in the
\begin{Poincare}[of Poincare indeces]
	Let all restrictions of {\bf Lemma 4} be valid. Let us consider the following mapping 
	$\mathfrak{D} \to \mathfrak{R}$ by the rule:
	\begin{equation}
	\left\{
	\begin{aligned}
		X(U, \Omega) = (U + a \cdot p_y) \cos\Omega, \\
		Y(U, \Omega) = (U + a \cdot p_y) \sin\Omega,
	\end{aligned}	
	\right.
	\label{map}
	\end{equation}
	where $+\infty > a \cdot p_y > -\inf \limits_{x \in \mathbb R} U(x)$ is some parameter, $E \in (-p_y, p_y)$, $E \notin \Spec(U, p_y)$. Then:
	
	\begin{itemize}
		\item[{\bf a}]
		All stable trajectories of the vector field ${\bf P}(X, Y) = {\bf F}(U(X, Y), \Omega(X, Y))$, $(X, Y) \in \mathfrak{R}$
		are open.
		All unstable trajectories (separatrices) are closed.
		
		\item[{\bf b}]
		In the previous section we introduced the total variance $\Delta\Omega_s(E)$, $s$ indicates left or right separatrix. The relation $\Delta\Omega_s(E)/(2 \pi)$ equals to integer number $\mathfrak{p}$ of full rotations of corresponding closed trajectory in the phase space $\mathfrak{R}$:
		$$
			\Delta\Omega_s(E) = 2 \pi \mathfrak{p}_s.
		$$
		 $\mathfrak{p}_s$ is the Poincare index of closed trajectory. 
	\end{itemize}
\end{Poincare}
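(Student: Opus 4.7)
The plan is to view the mapping in (\ref{map}) as a polar-coordinate chart $(r, \Omega) \mapsto (X, Y)$ with radial variable $r = U + a \cdot p_y$. Because $a \cdot p_y > -\inf_x U(x)$, the radius stays strictly positive along every solution of (\ref{Autosys}), so the image of any trajectory in $\mathfrak{R}$ avoids the origin and the winding number about the origin is a well-defined integer. Since $U(x) \to 0$ at $x \to \pm \infty$, both asymptotic endpoints of every trajectory lie on the circle $r = a \cdot p_y$, with angular coordinates given by the attractor/repellor values $\Omega_{\pm}$ of \textbf{Lemma 2}.

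For part \textbf{a}, I would first classify the asymptotic angles. A stable (non-separatrix) trajectory is drawn to the attractor at each end, so $\Omega(-\infty) = \Omega_+$ and $\Omega(+\infty) = \Omega_-$. From (\ref{attract}) one has $\sin \Omega_+ = \sin \Omega_- = -E/p_y$ while $\cos \Omega_+ = -\cos \Omega_-$, so these two values map to distinct points on the circle $r = a \cdot p_y$; hence the trajectory projects to an open curve. For a separatrix $\Omega_l$ with $E \notin \Spec(U, p_y)$, item \textbf{b} of the Levinson theorem gives $\Omega_l(+\infty) - \Omega_l(-\infty) \in 2\pi \mathbb{Z}$, so both endpoints belong to the same $\Omega_-$-family and map to the same point of $\mathfrak{R}$; the image is therefore a closed curve. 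The argument for $\Omega_r$ is identical with $\Omega_+$ in place of $\Omega_-$.

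For part \textbf{b}, the strict positivity of $r$ along the separatrix implies that $\Omega(x)$ coincides with the polar angle of $(X(x), Y(x))$ modulo $2\pi$. The winding number of the closed curve about the origin therefore equals the total angular sweep divided by $2\pi$, i.e.\ $\Delta\Omega_s(E)/(2\pi)$, which is by definition the Poincare index $\mathfrak{p}_s$ of the loop.

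The main obstacle is bookkeeping the piecewise-autonomous nature of $\mathbf{F}$: across the monotonicity intervals $I_j$ of $U$ the field is the chain $\{\mathbf{F}_j\}$ of (\ref{jtheq}), and the strips $\mathfrak{D}_j$ overlap in $U$, so $\mathbf{P}$ is not literally a single-valued vector field on $\mathfrak{R}$. I would sidestep this by parametrizing every trajectory by $x \in \mathbb{R}$ and mapping the $x$-parametrized pair $(U(x), \Omega(x))$ into $\mathfrak{R}$ via (\ref{map}); continuity of the composite curve follows from \textbf{Lemma 1}, and the closed/open dichotomy together with the winding count depend only on the asymptotic endpoints of this curve, which are controlled by \textbf{Lemma 2}.
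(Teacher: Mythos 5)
Your proposal is correct and follows essentially the same route as the paper: both arguments place the asymptotic endpoints on the circle $r = a\cdot p_y$ at the angles $\Omega_\pm$ of \textbf{Lemma 2}, use $\cos\Omega_+ = -\cos\Omega_-$ (equivalently $P_i \ne P_f$ for $k>0$) to show stable trajectories are open, invoke item \textbf{b} of the Levinson theorem to close the separatrix loops, and identify $\Delta\Omega_s/(2\pi)$ with the winding number. Your added remarks on why the winding number is well defined (the curve avoids the origin) and on parametrizing the chained fields $\mathbf{F}_j$ by $x$ are sound refinements of the same argument rather than a different approach.
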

\begin{proof}[{\bf Proof}]
	\begin{itemize}
		\item[{\bf a}]
		The mapping~(\ref{map}) is the mapping of stripe $\mathfrak{D}$ to the ring $\mathfrak{R}$ where all points $(U, \Omega + 2 \pi n)$, $n$ is integer, are identified. 
		
		The asymptotic behavior of stable trajectories of the field ${\bf P}(X, Y)$ is referred to stable solutions of Eq.~(\ref{supereq}) which start from attractor $\Omega_+$ at $x \to -\infty$ and finish to attractor $\Omega_-$ at $x \to +\infty$ as to {\bf Lemma 2}. Accounting that $U(x) \to 0$ at $x \to \pm \infty$, we conclude that stable trajectories in $\mathfrak{R}$ space start from the point 
		$$
		P_i = (-a \cdot k, -a \cdot E)
		$$
		because $X_i = a \cdot p_y \cos\Omega_+$, $Y_i = a \cdot p_y \sin \Omega_+$;
		and finish by another point
		$$
			P_f = (a \cdot k, -a \cdot E)
		$$
		because $X_f = a \cdot p_y \cos\Omega_-$, $Y_f = a \cdot p_y \sin \Omega_-$. If $E\in (-p_y, p_y)$ then $k > 0 $ and $P_f \ne P_i$. This means that stable trajectories are open.
		
		According to~(\ref{separatrix}), if $E \notin \Spec(U, p_y)$, $\Omega_l$ ($\Omega_r$) starts and finishes on the asymptotes from the same family: $\Omega_-$ for $\Omega_l$ and $\Omega_+$ for $\Omega_r$. Then, $P_i$ and $P_f$ are identical for them or, equivalently, their trajectories in $\mathfrak{R}$ space are closed.
		
		\item[{\bf b}]
		 It follows from the Levinson Theorem that $\Delta \Omega_l(E) = 2 \pi \mathfrak{p}_l$ where $\mathfrak{p}_l$ is integer. But from the continuity of $\Omega_l(x)$ we conclude that $\mathfrak{p}_l$ is the number of full rotations of the closed trajectory corresponding to the separatrix $\Omega_l$ in $\mathfrak{R}$ space. In other words, $\mathfrak{p}_l$ is the Poincare index of this closed trajectory~\cite{Poincare}.
	\end{itemize}
\end{proof}

\begin{figure}[htbp]
	\center
	\includegraphics[width=0.5\textwidth]{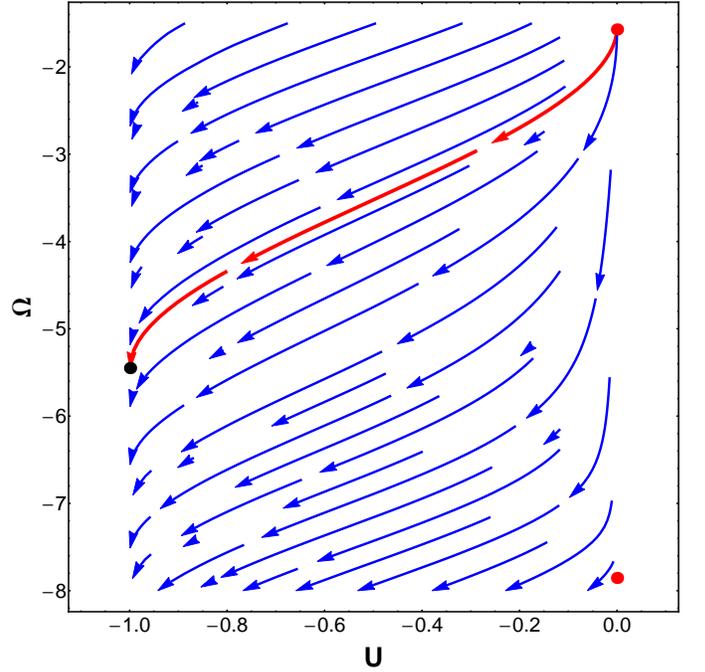}
	\caption{The vector field ${\bf F}(E = p_y)$, $p_y = 0.1$, $U_0 = 1$ on the interval $I_1 = (-\infty, 0)$. The trajectory $(U, \Omega_l(x_1(U)))$ corresponding to the separatrix $\Omega_l(x)$ (red streamline) starts from the initial (red) point $(U = 0, \Omega = -\pi/2)$ and ends when $U = -U_0 = -1$ (black point). The distance between red points is equal to $2 \pi$.} 
	\label{Vec1}
\end{figure}

\begin{figure}[htbp]
	\center
	\includegraphics[width=0.5\textwidth]{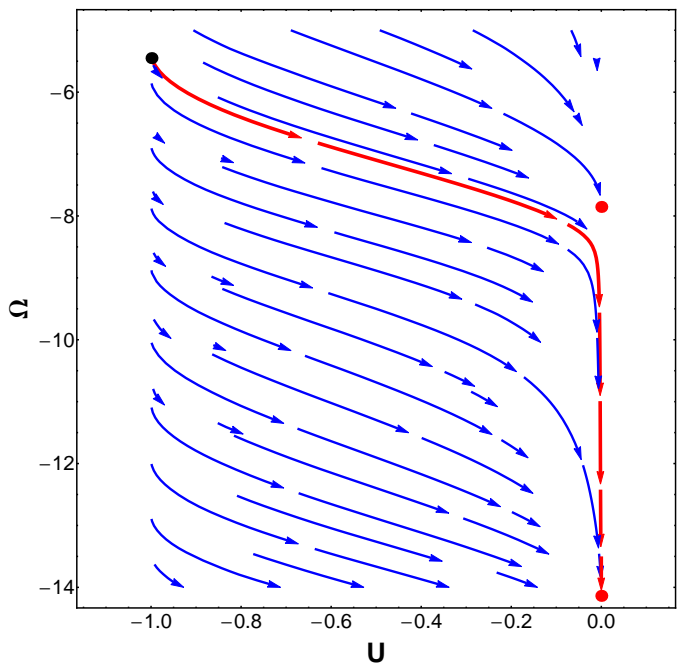} 
	\caption{The vector field ${\bf F}(E = p_y)$, $p_y = 0.1$, $U_0 = 1$ on the interval $I_2 = (0, +\infty)$. The trajectory $(U, \Omega_l(x_2(U)))$ corresponding to the separatrix $\Omega_l(x)$ (red streamline) starts from the black point which provides the continuity of $\Omega_l(x)$ at $x = 0$ and ends at the red point $(U = 0, \Omega = -9 \pi/2)$. The distance between red points is equal to $2 \pi$.
		\label{Vec2}}
\end{figure}

Here we present a simple example of the spectral analysis for the Lorentzian shaped confinement $$
U(x) = - U_0/(x^2+1).
$$
We are going to plot the vector field ${\bf F}(U, \Omega)$ and calculate the number of bound states at some particular $p_y$ and $U_0$.

First, we need to find $G_j(U)$ for each interval of monotonicity $I_1 = (-\infty, 0)$ and $I_2 = (0, + \infty)$:
$$
G_{n}(U) = (-1)^n \frac{2U^2}{U_0} \sqrt{-\frac{U_0}{U} - 1}
$$
for the interval $I_n$, $n = \{1, 2\}$, $U \in [-U_0, 0]$. 

Then we set the parameters $p_y = 0.1$, $U_0 = 1$. In order to find the total number of confined modes, we apply Eq.~(\ref{Totallevels}). We need to plot the phase portrait only for two energies $E = \pm p_y$. Pictures (Fig.~\ref{Vec1}--\ref{Vec2}) of the vector field ${\bf F}(E = p_y)$ show the approximate trajectory $(U, \Omega_l(x(U)))$ (red line) for two intervals $I_{1, 2}$. We chose the point $(U = -10^{-6}, \Omega = - \pi/2 + 0.05)$ as the initial condition for the trajectory $(U, \Omega_l(x_1(U)))$ on the interval $I_1$.  Matching trajectories corresponding to the intervals $I_1$ and $I_2$ (black points on Fig.~\ref{Vec1}--\ref{Vec2}) we finally obtain the variance $\Delta \Omega_l(p_y) = -4 \pi$. Analogically, drawing such pictures for $E = -p_y$ we get $\Delta \Omega_l(p_y) = 0$. Eq.~(\ref{Totallevels}) yields $N_d(p_y) = 2$ confined energy levels for $p_y = 0.1$.

We have to remark that initial condition for $\Omega_l$ must be perturbed from ideal point $(U=0, \Omega = \Omega_-)$ because it is stationary point of Eq.~(\ref{Autosys}) according to {\bf Lemma 4}. However, the result is stable towards little shaking of initial conditions because of the stability of the Poincare index or so-called topological charge.

\section{Conclusions}
The variable phase method has been developed herein for the electrostatically confined 2D massless Dirac-Weyl particles such as electrons in graphene devices. The desirable phase function $\Omega(x)$ appears as the phase between two chiral states whose superposition yields the wave function of the confined state. Besides the well-known non-relativistic and semi-classical limits, it has been shown that confined states with small $p_y$ (see the condition~(\ref{pycond})) are successfully described in the so-called $\delta$-potential limit that is valid for every integrable potential $U(x)$. The relativistic Levinson theorem has then been formulated and proved for the variance $\Delta \Omega_l(E)$ of the separatrix $\Omega_l(x)$ of Eq.~(\ref{supereq}). As a consequence of the theorem, the number of confined modes with given $p_y$ has been derived. Finally, the geometrical approach to find the function $\Delta \Omega_l(E)$ has been suggested.

We note that this paper is dedicated exceptionally to the discrete part of the specter. The developed approach can be extended to analyze half-bound and quasi-bound states where the last ones are important for better understanding of supercriticality.

\section{Acknowledgements}
I am grateful to M. V. Entin for useful discussions and the critical leading of the manuscript. The work was supported by the RFBR grant 14-02-00593. 

\section{Appendix A: unambiguous solution of the $\delta$-potential}

One can find in the literature that $U(x) = G \delta(x)$ does not have definite solutions for Dirac-Weyl equation~\cite{Calkin}--\cite{McKellar}. This problem arises from the fact that the wave function is discontinuous at $x = 0$ and it results in the ambiguous integral of the type
$$
\int\limits_{-\epsilon}^\epsilon \delta(x) \theta(x) \, dx
$$ 
which takes an arbitrary value from the segment $[0, 1]$, $\theta(x)$ is the Heaviside step function, $\epsilon \to +0$. This problem is bypassed by A. Calogeracos {\it et al.}~\cite{Imagawa}. They represented the wave function $\Psi(x)$ as the $x$-ordered exponent (the analogue of the evolution operator) acting on the wave function in the initial point $x_0$. We cite herein the exact solution of Eq.~(\ref{eq}) in order to demonstrate explicitly the absence of any ambiguities. 

Let us start from Eq.~(\ref{eq}):
\begin{equation}
\label{eqDelta}
g''\left(x\right)+2i\left(E-G \delta(x)\right)g'\left(x\right)-p_y^2 g\left(x\right)=0.
\end{equation}
The function $g(x)$ appears to be continuous, $g'(x)$ is discontinuous at $x = 0$. Assume that $g'(\pm 0) \ne 0$ and divide this equation over the function $g'(x)$, $x \in I_\epsilon = (-\epsilon, \epsilon)$. Integrating then this equation over the interval $I_\epsilon$ and taking the limit $\epsilon \to +0$ we arrive at the correct matching condition:
\begin{equation}
\label{MatchDelta}
\frac{g'(+0)}{g'(-0)} = e^{2 i G}.
\end{equation}

If one is interested in the discrete spectrum of this problem one has to apply the condition~(\ref{MatchDelta}) to the function $g(x) = g_0 e^{-i E x} e^{- k |x|}$ which represents the common form of the continuous at $x = 0$ bounded solution of Eq.~(\ref{eqDelta}), $k = \sqrt{p_y^2 - E^2}$. This yields explicitly the spectrum~(\ref{Deltalimit}). The initial assumption $g'(\pm 0) \ne 0$ is obviously valid for such functions $g(x)$.

If we consider the scattering problem with definite $|E| > p_y$, the continuous function $g(x)$ has the following form:
$$
g(x) = 
\left\{
\begin{array}{ll}
A e^{ix (k - E)} + B e^{-ix (k + E)}, x < 0\\
(A + B) e^{ix (k - E)}, x > 0,
\end{array}
\right.
$$
$k = \sqrt{E^2 - p_y^2}$. Applying the condition~(\ref{MatchDelta}) one can receive the transmission coefficient:
$$
T = \left|1 + \frac{B}{A} \right|^2 = \frac{k^2}{k^2 + p_y^2 \sin^2 G}.
$$
Finally, we have to check that the initial assumption $g'(\pm 0) \ne 0$ is not violated. $g'(+0) \ne 0$ as far as $E \ne k$ when $p_y \ne 0$. Suppose then that $g'(-0) = 0$ which leads to $A (k - E) = B (k + E)$ or equivalently $T = 4 k^2 / (k + E)^2$. This makes no physical sense because the transmission coefficient $T$ is not dependent on the parameter $G$ in this case. Hence, the unambiguous solution for the case of the $\delta$-potential is provided.

We can suggest an easier way to get the discrete spectrum for this potential. By integrating Eq.~(\ref{supereq}) and applying boundary conditions~(\ref{asymptotes}) we finally get:
\begin{equation}
\Delta \Omega = \Omega_+ - \Omega_- = 2 G
\end{equation}
which gives explicitly the spectrum~(\ref{Deltalimit}).

\begin{thebibliography}{10}

\bibitem{Popov} V. S. Popov, Sov. Phys. JETP {\bf 32}, 3 (1971).

\bibitem{Zeldovich} Ya. B. Zeldovich, V. S. Popov, Sov. Phys. Usp. {\bf 14}, 673--694 (1972).

\bibitem{Gershtein} S. S. Gershtein, and V. S. Popov, Lett. Nuovo Cimento {\bf 6}, 14 (1973).

\bibitem{Oraevskii} V. N. Oraevskii, A. I. Rex, and V. B. Semikoz, Zh. Eksp. Teor. {\bf 72}, 820--833 (1977).

\bibitem{Imagawa} A. Calogeracos, N. Dombey, and K. Imagawa, Phys. Atom. Nucl. {\bf 59}, 1275 (1996).

\bibitem{Levitov2} A. Shytov, M. Rudner, N. Gu, M. Katsnelson, and L. Levitov, Solid State Commun. {\bf 149}, 1087--1093 (2009).

\bibitem{Milstein} A. I. Milstein, and I. S. Terekhov, Phys. Rev. B {\bf 81}, 125419 (2010).

\bibitem{Nielsen} H. B. Nielsen, M. Ninomiya, Phys. Lett. B {\bf 130}, 6 (1983).

\bibitem{Landsteiner} K. Landsteiner, Phys. Rev. B {\bf 89}, 075124 (2014).

\bibitem{Ando} T. Ando, T. Nakanishi, and R. Saito, J. Phys. Soc. Jpn. {\bf 67}, 2857 (1998).

\bibitem{Novikov} D. S. Novikov, and L. S. Levitov, Phys. Rev. Lett. {\bf 96}, 036402 (2006).

\bibitem{Shytov} A. V. Shytov, M. S. Rudner, and L. S. Levitov, Phys. Rev. Lett. {\bf 101}, 156804 (2008).

\bibitem{Tudorovskiy} T. Tudorovskiy, K. J. A. Reijnders, M. I. Katsnelson, Phys. Scripta T {\bf 146}, 014010 (2012).

\bibitem{Miserev} D. S. Miserev, and M. V. Entin, JETP {\bf 115}, 694--705 (2012).

\bibitem{Reijnders} K. J. A. Reijnders, T. Tudorovskiy, M. I. Katsnelson, Ann. Phys. {\bf 333}, 155--197 (2013).

\bibitem{Morse} P. M. Morse, and W. P. Allis, Phys. Rev. {\bf 44}, 269 (1933).

\bibitem{Babikov} V. V. Babikov, Sov. Phys. Usp. {\bf 10}, 271 (1967).

\bibitem{Calogero} F. Calogero, Variable Phase Approach to Potential Scattering, Academic Press, New York (1967).

\bibitem{Sobel} M. I. Sobel, Nuovo Cimento A {\bf 65}, 117--134 (1970).

\bibitem{Landman} U. Landman, Phys. Rev. A {\bf 5}, 1 (1972).

\bibitem{Ouerdane} H. Ouerdane, M. J. Jamieson, D. Vrinceanu, and M. J. Cavagnero, J. Phys. B {\bf 36}, 4055 (2003).

\bibitem{Levinson} N. Levinson, K. Dan. Vidensk. Selsk. Mat. Fys. Medd. {\bf 25}, 9 (1949).

\bibitem{Klaus} M. Klaus, J. Math. Phys. {\bf 31}, 182 (1990).

\bibitem{Hayashi} K. Hayashi, Progr. Theoret. Phys. {\bf 35}, 3 (1966).

\bibitem{Warnock} R. L. Warnock, Phys. Rev {\bf 131}, 1320 (1963).

\bibitem{Dong} S. Dong, X. Hou, Z. Ma, Phys. Rev. A {\bf 58}, 2160 (1998).

\bibitem{Clemence} D. P. Clemence, Inverse Probl. {\bf 5}, 269 (1989).

\bibitem{Lin} Q. Lin, Eur. Phys. J. D {\bf 7}, 515 (1999).

\bibitem{Calogeracos} A. Calogeracos, N. Dombey, Phys. Rev. Lett. {\bf 93}, 180405 (2004).

\bibitem{Ma} Z. Ma, S. Dong, and L. Wang, Phys. Rev. A {\bf 74}, 012712 (2006).

\bibitem{Hartmann} R. R. Hartmann, N. J. Robinson, and M. E. Portnoi, Phys. Rev. B {\bf 81}, 245431 (2010).

\bibitem{Quasi} R. R. Hartmann, M. E. Portnoi, Phys. Rev. A {\bf 89}, 012101 (2014).

\bibitem{Portnoi} D. A. Stone, C. A. Downing, and M. E. Portnoi, Phys. Rev. B {\bf 86}, 075464 (2012).

\bibitem{Calkin} M. G. Calkin, D. Kiang, and Y. Nogami, Am. J. Phys. {\bf 55}, 737 (1987).

\bibitem{McKellar} B. H. J. McKellar, and G. J. Stephenson Jr., Phys. Rev. C {\bf 35}, 2262 (1987).

\bibitem{Poincare} H. Poincare, On Curves Defined by Differential Equations, Gostekhizdat, Moscow (1947).

\end {thebibliography}
\end{document}